\newenvironment{proof}
{ \par\noindent{$ \mathsf{Proof\!:}$}\\[5pt]}
{ \phantom{xxxxx}\hspace{\stretch{1}}$\Box$}
\newenvironment{remark*}[1][]
{ \par\noindent{\bf Remark{#1}:} }
{ \hspace{\stretch{1}} }
\newtheorem{theorem}{Theorem}
\newtheorem{corollary}[theorem]{Corollary}
\newtheorem{definition}[theorem]{Definition}
{\theorembodyfont{\rmfamily}\newtheorem{example}[theorem]{Example}}
\newtheorem{lemma}[theorem]{Lemma}
\newtheorem{proposition}[theorem]{Proposition}
{\theorembodyfont{\rmfamily}}
\newcommand{\mylabel}[1]{\label{#1}}  
\def\ds{\displaystyle}
\def\e{\varepsilon}
\def\/#1{{\rm({\bf #1})} }
\def\Ps{\mathcal{C}}
\def\Fs{{\mathfrak M}}
\def\F{\mathscr M}
\def\H{\mathcal G}
\def\G{\mathscr N}
\def\h{{\mathfrak{g}}}
\def\f{{\mathfrak{m}}}
\def\ff{$\f$}
\def\P{{P}}
\def\Q{{Q}}
\def\dottt{\scalebox{0.8}{\ldots}}
\def\Fo{\F_+}
\def\fo{\f_+}
\def\hh{\tilde{\h}}
\def\br#1{\hfill \\[#1 pt]}
\newcommand{\delete}[1]{}
\newcommand{\dsskip}[1][0.69]{
  \abovedisplayskip=#1\abovedisplayskip
  \belowdisplayskip=#1\belowdisplayskip
}%
\newcommand{\dsskipp}[1][0.7]{
  \setstretch{0.9}
  \abovedisplayskip=#1\abovedisplayskip
  \belowdisplayskip=#1\belowdisplayskip
}%
\def\bb{\mathbb}
\def\N{{\bb N}}				    
\def\R{{\bb R}}				    
\def\Z{{\bb Z}}				    
\def\nrN{N}                                                       
\def\nrM{M}                                                      
\def\cf{\mathfrak{n}}                                         
\def\cfu{\cf_\star}                                              
\def\ff{\mathfrak{f}}                                           
\def\ffu{\ff_\star}                                                
\def\efN{\mathscr{N}}                                        
\def\efNm{\efN_\star}                                        
\def\pN{\mathscr{N}_p}                                     
\def\mN{\mathscr{N}_+}                                    
\def\mn{\mathrm{n}_+}                                      
\def\efF{\mathscr{F}}                                         
\def\efFm{\efF_\star}                                         
\def\w{c}                                                              
\def\W{C}                                                             
\def\setW{\mathcal{C}}                                        
\def\v{b}                                                               
\def\V{B}                                                              
\def\wl{w}                                                             
\def\Wl{W}                                                            
\def\setWl{\mathcal{W}}                                       
\def\setP{\mathcal{P}}                                          
\def\Nmaps{{\mathfrak N}}                                   
\def\scrk{{\scriptscriptstyle (k)}}                            
\def\Hsp{\mathscr{H}}                                           
\def\ketpsi{{\mid \! \psi \,\rangle}}                          
\def\keti{{\mid \! i \,\rangle}}                                   
\def\ketj{{\mid \! j \,\rangle}}                                   
\def\ketchim{{\mid \! \chi_m \,\rangle}}                  
\def\union{\cup}                               
\def\p{c}\def\P{C}\def\Q{B} 
\begin{document}


\title[]{Effective Number Theory: Counting the Identities of a Quantum State}

\author{Ivan Horv\'ath} 
\email[]{ihorv2@g.uky.edu} 
\affiliation{University of Kentucky, Lexington, KY, USA\\  ORCiD: 0000-0001-8810-9737}

\author{Robert Mendris}
\email[]{rmendris@shawnee.edu}
\affiliation{Shawnee State University, Portsmouth, OH, USA\\  ORCiD: 0000-0003-3267-3552}

\vskip 0.5in

\date{Jul 11, 2018 ; published version Nov 10, 2020}

\begin{abstract}
Quantum physics frequently involves a need to count the states, subspaces, 
measurement outcomes, and other elements of quantum dynamics. 
However, with quantum mechanics assigning probabilities to such objects, 
it is often desirable to work with the notion of a ``total'' that takes into 
account their varied relevance.
For example, such an effective count of position states available to a lattice 
electron could characterize its localization properties. 
Similarly, the effective total of outcomes in the measurement step 
of a quantum computation relates to the efficiency of the quantum algorithm.
Despite a broad need for effective counting, a well-founded prescription has not been 
formulated. Instead, the assignments that do not respect the measure-like nature of 
the concept, such as versions of the participation number or exponentiated entropies, 
are used in some areas. 
Here, we develop the additive theory of effective number functions 
(ENFs), namely functions assigning consistent totals to collections of objects endowed 
with probability weights. Our analysis reveals the existence of a {\em minimal total}, 
realized by the unique ENF, which leads to effective counting with absolute meaning. 
Touching upon the nature of the measure, our results may find applications not only in 
quantum physics, but also in other quantitative sciences.\\
\\Keywords: effective number, effective measure, quantum identities, quantum uncertainty, 
localization, quantum computing, diversity measure, effective choices, inverse participation
number

\end{abstract}

\maketitle



\section{\label{sec:1} Motivation and Overview} 

Among the distinctive features of quantum mechanics is that, in some regards, a quantum system 
acts as though it is simultaneously in multiple states of a given type. As an extreme example, 
a lattice Schr\"odinger particle in a momentum eigenstate is often said to reside at all positions, 
having an equal chance of being detected anywhere. However, how many of such 
``position identities'' are effectively present in a generic state that can assign 
an arbitrarily varied relevance (probability) to different locations? 

Variants of this counting problem appear in quantum physics quite often. One 
example arises in the context of Anderson localization~\cite{And58A} 
(see e.g.~\cite{Mar06A,Eve08A} for reviews). 
Indeed, the Fermi-level electron inside the band of extended states (Anderson conductor)
is thought of as effectively present in most of the available position states.
In contrast, such an electron inside the band of localized states (Anderson insulator) only 
resides in a drastically reduced subset of them. 
Hence, 
a well-founded effective counting of states could be used to quantitatively describe 
the transition between these regimes in a novel way.

Viewing an Anderson electron from a different perspective, the effective state counting 
could also be used to analyze the indeterminacy (quantum uncertainty) associated with 
the measurement of its position. In such an approach, uncertainty would be represented by
the effective number of position states the electron collapses into upon repeating 
the experiment: the smaller this effective number, the smaller the uncertainty. Note that 
the general treatment of quantum indeterminacy this way (with respect to an arbitrary basis) 
would be very different in nature from e.g. the classic spectral approach~\cite{Hei27A, Ken27A}.


The above types of physics analyses may usefully materialize if the generic question [Q] 
below can be suitably formalized and meaningfully answered. In particular, if $\ketpsi$ is a state 
from $\nrN$-dimensional Hilbert space and 
$\{ \, \keti \} \equiv \{ \, \keti  \mid i=1,2,\ldots,\nrN \,\}$ 
its orthonormal basis, it is desirable to ask:\footnote{Note that this canonical setting already covers 
many-body and field-theoretic systems whose quantum dynamics can be defined via lattice 
regularization. The extension to continuously labeled bases will be explicitly given in the context 
of application to quantum uncertainty~\cite{Hor18B}.}


\medskip
\noindent 
     {\bf [Q]} {\em How many states from $\{ \, \keti \}$ is the system 
     described by $\ketpsi$ effectively in?} 
\medskip

\noindent
A well-founded resolution of this ``quantum identity problem'' is not readily 
available.\footnote{This is rooted in the fact that quantifiers of the desired type 
cannot be expressed as quantum-mechanical expectation values in state $\ketpsi$.}
In this paper, we develop a theoretical framework (effective number theory)
that gives the rationale to the following answer:

\medskip

   \noindent
   {\bf [A]} {\em Let $P \!=\! (p_1, p_2,\ldots,p_\nrN ) \,,\,
   p_i = \,\mid \!\! \langle \,i \!\mid \! \psi \,\rangle \!\! \mid^2$, be 
   the probability
   vector assigned to quantum state $\ketpsi$ and basis $\{ \, \keti \}$, and~let 
   $\W \!\!=\!\! (\w_1,\w_2,\ldots,\w_\nrN) \,,\, \w_i \!=\! \nrN p_i$. The system described by 
   $\ketpsi$ is effectively in $\efNm[\, \ketpsi , \{ \, \keti \} \,] \!=\! \efNm[\W]$ 
   states from $\{ \, \keti \}$, where} 
   \begin{equation}
      \efNm[\W]  \,=\, \sum_{i=1}^\nrN \cfu(\w_i)   \quad,\quad
      \cfu(\w)  \; = \;   \min\, \{ \w, 1 \}    \quad  \text{for all} \quad c \in [0,\infty)
      \label{eq:015}         
   \end{equation}
 
 
\noindent   
To arrive at [A], we start with the axiomatic definition of the {\em effective number 
function} (ENF) $\efN[\, \ketpsi , \{ \, \keti \} \,] \!=\! \efN[\W]$, namely 
a function consistently assigning the effective totals. Solving [Q] then amounts to 
finding such an $\efN$ and using it to specify the effective number of quantum 
identities in all situations. 
The subsequent analysis shows, however, that there exists an entire continuum
of ENFs. This could render each fixed choice of $\efN$ too arbitrary 
and its individual value uninformative on its own.\footnote{ENFs would still be 
useful since, by construction, each of them individually conveys a universal 
comparative information about effective totals.} 
Interestingly, this is not the case because it turns out that $\efNm$
is an ENF with absolute meaning. Indeed, we will prove that 
$\efNm[\W] \le \efN[\W]$ for all $\W$ and 
all~$\efN$, making $\efNm$ the unique minimum (least element) on the set of 
all ENFs. Having revealed that the system in state $\ketpsi$ has to be 
characterized as being simultaneously in at least 
$\efNm[\, \ketpsi , \{ \, \keti \} \,]$ states from $\{ \, \keti \}$, 
this result is used in [A] as a basis for the meaningful canonical 
choice of ENF. It should be noted in this regard that a maximal ENF, 
whose interpretation would otherwise be on equal footing with $\efNm$, 
does not exist (see Theorem~2).\footnote{
The existence of multiple ENFs endows the constructed framework with 
flexibility to accommodate quantum identity problems more structured 
than [Q]. This may entail an additional problem-specific constraint(s) 
on an ENF, possibly leading to a unique or privileged choice other than 
$\efNm$. However, a generic extra requirement is that the effective total
determines the subset of $\{ \, \keti \}$ in which $\ketpsi$ is effectively present.
For example, in the context of Anderson localization, it is of interest
to identify the spatial region effectively occupied by the electron.
It can be shown that $\efNm$ is the only ENF leading to a consistent selection 
of such {\em effective support} of $\ketpsi$ on $\{ \, \keti \}$.}

%
 
A crucial novelty in our approach is the inclusion of additivity as a requirement 
for ENFs. This step is necessary since the effective number of states is an additive 
concept. However, a proper formulation requires some care. To that end, as well as 
to start invoking parallels with localization, consider the simple setting 
of a spinless Schr\"odinger particle on a finite lattice. 
In the position basis, its state $\ketpsi$ is represented by $\nrN$-tuple 
$( \psi(x_1),\ldots,\psi(x_\nrN) )$, with $p_i = \psi^\star \psi(x_i)$ being 
the probability of detection at the location $x_i$. Denoting by $\setW$ the set of
all {\em counting} weight vectors $\W=(\w_1,\ldots,\w_\nrN)$, $\w_i \!=\!\nrN p_i$, 
namely\footnote{Working with counting vectors \eqref{eq:005} rather than probability 
vectors $P \in \setP  =  \mathop{\cup}_{\nrN}  \, \setP_\nrN$ is simply a matter 
of convenience. All results translate straightforwardly.} 
\begin{equation}
    \setW=\union_\nrN \setW_\nrN    \qquad , \qquad
    \setW_\nrN  \,=\, 
    \bigl\{ \, (\w_1,\w_2, \ldots, \w_\nrN) \;\mid\;  \w_i \ge 0 \,,\,\, 
    \sum_{i=1}^\nrN \w_i = \nrN \, \bigr\}  \quad
    \label{eq:005}
\end{equation}
the additivity property for $\efN$ arises as follows. Assume that the particle is 
restricted to a lattice of $\nrN_1$ sites in a state generating the weight vector 
$\W_1 \!\in\! \setW_{\nrN_1}$. Separately, let it be restricted to a non-overlapping 
adjacent lattice of $\nrN_2$ sites and characterized by $\W_2 \!\in\! \setW_{\nrN_2}$.  
With symbol $\boxplus$ representing the concatenation 
operation\footnote{If $\W \!=\! (\w_1,\ldots, \w_\nrN) \in \setW_\nrN$ and 
$\V \!=\! (\v_1,\ldots,\v_\nrM) \in \setW_\nrM$, then 
$\W \boxplus \V \equiv (\w_1, \ldots, \w_\nrN, \v_1, \ldots, \v_\nrM)$.}, 
since
\begin{equation}
      \W \,=\, \W_1  \boxplus \W_2 \,\in\, \setW_{\nrN=\nrN_1+\nrN_2}  
       \label{eq:025}         
\end{equation}
there exists a state of the particle on the combined lattice, producing this composite 
$\W$. Given the additivity of numbers, the sum rule for the number of available states 
($\nrN = \nrN_1 + \nrN_2$) 
has to hold for its effective counterpart as well 
(\,$\efN[\W] = \efN[\W_1] + \efN[\W_2]$\,).

Consequently, the additivity property that we impose is
\begin{equation}
     \efN \bigl[ \W_1 \boxplus \W_2, \, \nrN_1 + \nrN_2 \bigr]  \;=\; 
     \efN \bigl[ \W_1, \nrN_1 \bigr]  \,+\, \efN \bigl[ \W_2, \nrN_2 \bigr]   
     \quad , \quad \forall \, \W_1, \W_2   \quad
     \tag{A}
     \label{eq:add}         
\end{equation}
Here, the dimensions of vector arguments were made explicit to emphasize that 
$\efN[\W,\nrN]$ represents $\nrN$ modified by distribution $\W$. Notice that $\efNm$ is 
evidently additive and that the above reasoning does not depend on the system, state, 
or basis in question. 

Several decades ago, Bell and Dean~\cite{Bel70A} dealt with a problem analogous to [Q] 
while analyzing the localization properties of vibrations in glassy silica. In particular, 
they asked how many atoms do these vibrations effectively spread over. 
Their quantifier, the participation number $\pN$, is given by 
\begin{equation}
    \frac{1}{\pN[\W]} \,=\, \frac{1}{\nrN^2} \sum_{i=1}^\nrN \w_i^2  \quad
    \label{eq:045}              
\end{equation}
and is still widely used in the analysis of localization. In other areas, it is common 
to exponentiate a suitable entropy, such as the Shannon~\cite{Sha48A} or 
R\'enyi entropies~\cite{Ren60A}, and use it
for analogous purposes. However, none of these quantifiers is \eqref{eq:add}-additive.
Their interpretation as effective totals is thus vague and they tend to be too 
arbitrary. In contrast, incorporating additivity into the definition of ENFs leads to 
the resolution of the quantum identity problem and suggests new possibilities 
both in physics and measure-related aspects of 
mathematics. The effective number theory, which we develop here as a tool to solve [Q], 
provides a theoretical starting point for such developments. 

In the rest of this section, we describe the construction of ENFs and discuss the key 
results of effective number theory. The goal here is to provide a concise but rigorous 
overview, including the motivations for axiomatic properties, as well as the ramifications 
of deduced features. A fully mathematical treatment in the technically convenient dual 
form of effective complementary numbers (co-numbers) is then given in 
Sec.~\ref{sec:eff_numbers}. Various generalizations of the quantum identity problem 
are discussed in Sec.~\ref{sec:count_idents}.
We then outline the use of effective numbers in quantum theory from 
a very broad perspective, namely as a general tool to characterize quantum states 
(Sec.~\ref{sec:str_states}). 
Concluding remarks are given in Sec.~\ref{sec:conclude}.


\subsection{Effective Numbers}
\label{ssec:effnums_def}

We now develop the notion of ENF as a function $\efN \!=\! \efN[\W]$, assigning an
effective total to each distribution of weights $\W \in \setW$ over the elements of a basis. 
Such a construction clearly does not depend on the fact that counted objects are quantum 
states, and we will thus use generic terms in that regard from now on. The underlying goal 
is to extend the ``counting measure'' for a collection of distinct, but otherwise equivalent 
objects (natural number $\nrN \in \N$) to the situation when these objects acquire varied 
importance expressed by their counting weights (effective number $\efN[\W] \!\in \R$). 
The additivity property \eqref{eq:add} is thus a basic consistency requirement for 
acceptable ENFs.

Like in ordinary counting, no specific relation among individual objects is assumed. 
Thus, in the same way the number of balls in a bag does not change upon their reshuffle, 
the effective number will not change upon the permutation of counting weights. In other 
words, ENFs are required to be totally symmetric in their 
arguments, namely\footnote{We write $\efN(\w_1,\ldots,\w_\nrN)$ when weights need to be 
distinguished, but use the functional notation $\efN[\W]$ otherwise.}
\begin{equation}
    \efN(\ldots \w_i  \ldots \w_j  \ldots)  \,=\,  \efN(\ldots \w_j  \ldots  \w_i  \ldots)
    \quad\; , \quad\; \forall \, i \ne j
     \quad
     \tag{S}
     \label{eq:sym}             
\end{equation}    

Extensions $\nrN \rightarrow \efN[\W]$ are by definition such that ordinary counting corresponds 
to all objects being equally important, and thus to a uniform distribution. More precisely, 
\begin{equation}
     \efN(1,1,\ldots,1) = \nrN      \quad\; , \quad\;  
     (1,1,\ldots,1) \in \setW_\nrN  \quad\; , \quad\;
     \forall \, \nrN
     \quad
     \tag{B1}
     \label{eq:bou1}                   
\end{equation}
On the other hand, whenever all the weight is given to a single object, all others being irrelevant, 
the effective number is required to be one, namely 
\begin{equation}
     \efN(\nrN,0,\ldots,0) \,=\, \efN(0,\nrN,0,\ldots,0) \,=\, \ldots \,=\, 
     \efN(0,\ldots,0,\nrN) \,=\, 1 
     \quad
     \tag{B2}
     \label{eq:bou2}                   
\end{equation}
within each $\setW_\nrN$.
Note that $(1,1,\ldots,1) \in \setW_\nrN$ and $(\ldots,0,\nrN,0,\ldots) \in \setW_\nrN$ 
are the opposite extremes in the cumulation of weight. Hence, the effective number of objects 
with arbitrary weights has to fall between the corresponding extremal values, namely
\begin{equation}
     1 \, \le \, \efN[\W] \, \le \, \nrN   \quad , \quad \forall \, \W \in \setW_\nrN 
                                                      \quad , \quad   \forall \, \nrN   \quad    
     \tag{B} 
     \label{eq:bou}                                                                 
\end{equation}

The degree of weight cumulation plays a more detailed role in effective numbers than 
just determining the boundary properties. Indeed, the concept has to respect that
increasing the cumulation in the distribution cannot increase the effective number. 
To formulate such monotonicity, consider two objects 
weighted by $\W \!=\! (\w_1,\w_2) \in \setW_2$ with $\w_1 \le \w_2$. The deformation 
$\W \rightarrow \W_\epsilon = (\w_1-\epsilon, \w_2+\epsilon)$ leads to further cumulation 
in favor of the second object, and thus, $\efN[\W_\epsilon] \le \efN[\W]$ is imposed for all
$0 \le \epsilon \le \w_1$. In a situation with an arbitrary $\nrN$, we require the same for each 
ordered pair $\w_i \le \w_j$ and deformation $0 \le \epsilon \le \w_i$, 
namely\footnote{To visualize how the elementary deformation in \eqref{eq:mon-} increases 
cumulation, one may picture each object as a cylindrical column of incompressible liquid in 
the amount of its counting weight. Arranging the columns by increasing height from the left 
to the right produces a half-peak profile with cumulation on the right. Consider the segment of 
this profile delimited by columns $\w_i$ and $\w_j$ entering \eqref{eq:mon-}. The monotonicity 
operation is represented by the transverse flow of liquid from the left to the right endpoint 
through columns between them. It is understood that the columns are ordered at every moment of 
the flow and thus, as the amount of liquid at the endpoints changes, the length of 
the segment may increase. Since the liquid flows toward the center of cumulation at every 
point of the process, the resulting distribution is more cumulated than the original one.}
\begin{equation}
      \efN(\ldots \w_i - \epsilon  \ldots  \w_j + \epsilon  \ldots)   \,\le\, 
      \efN(\ldots  \w_i  \ldots  \w_j  \ldots)   
      \quad   
      \tag{M$^-$}
      \label{eq:mon-}                   
\end{equation}
It is easy to check that \eqref{eq:mon-}-monotonic $\efN$ attains its maximal value over 
$\setW_\nrN$ at $(1,1,\ldots,1)$, while the minimum is at one or multiple fully cumulated 
vectors $(\ldots,\nrN,\ldots)$. Conditions \eqref{eq:bou1}, \eqref{eq:bou2}, and \eqref{eq:bou} 
are thus compatible with~\eqref{eq:mon-}.\footnote{\label{note:schur} Monotonicity 
\eqref{eq:mon-} is closely related to Schur concavity. The latter is equivalent 
to imposing \eqref{eq:mon-} and symmetry \eqref{eq:sym} simultaneously 
(see e.g.~\cite{Arn87A}).}
Note that, although not an ENF, the participation number \eqref{eq:045} satisfies 
\eqref{eq:mon-} monotonicity.

\begin{figure}[t!]
\begin{center}
    \centerline{
    \hskip 0.00in
    \includegraphics[width=0.75\columnwidth,angle=0]{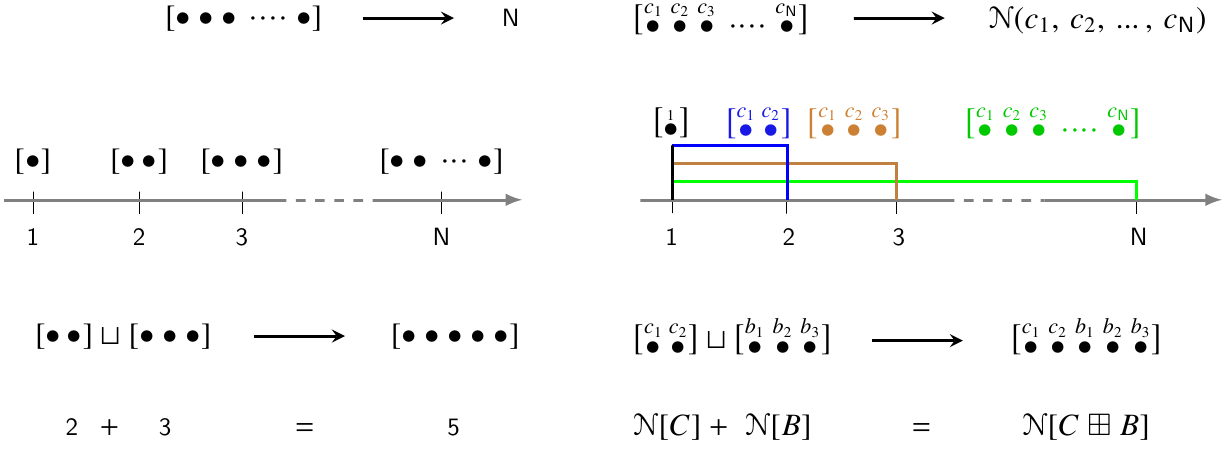}
     }
     \vskip -0.06in
     {\caption{Schematic representation of the extension from natural numbers (left) to effective numbers 
     (right). See the discussion in the text.}
     \label{fig:numbers}}   
     \vskip -0.45in
\end{center}
\end{figure}

The final requirement in the definition of ENFs is continuity. The nature of problems with admitting 
discontinuities can be illustrated by 
\begin{equation}
     \mN[\W] = \sum_{i=1}^\nrN \mn(\w_i)    \quad , \quad
     \mn(\w)  \,=\,
     \begin{cases} 
     \;0 \;,   &   \;  \w=0 \\[8pt]
     \;1 \;,   &   \;  \w > 0
     \end{cases} 
     \quad
     \label{eq:075}         
\end{equation}
which counts the number of non-zero weights in $\W$ and will be relevant later in our analysis. 
Consider again two objects with $\W \!=\! (\w,2-\w)$. When $\w$ approaches zero, thus marginalizing 
the first object to an arbitrary degree, the effective number should approach one. However, this does 
not materialize in $\mN$ due to its discontinuity. In general, we require that the ENF cannot jump 
upon an arbitrarily small change of weights, namely
\begin{equation}
    \efN = \efN[\W]  \;\,
    \text{is continuous on} \;\, \setW_\nrN 
    \quad , \quad \forall \,\nrN  \quad 
    \tag{C}
    \label{eq:con}             
\end{equation}   
 
The properties discussed above define the set $\Nmaps$ of all effective number functions. 
However, there are dependencies among these requirements. In particular, it can be easily 
checked that the boundary condition \eqref{eq:bou1} is a consequence of \eqref{eq:bou2} 
and additivity. Similarly,  \eqref{eq:bou} follows from \eqref{eq:bou1}, \eqref{eq:bou2}, 
symmetry and monotonicity. This leaves us with

\medskip

\noindent {\bf Definition~0.}   {\em A real-valued function $\efN \!=\! \efN[\W]$ on $\setW$ is  
called an effective number function (belongs to set $\Nmaps$) if it is simultaneously additive 
\eqref{eq:add}, symmetric \eqref{eq:sym}, continuous \eqref{eq:con}, monotonic \eqref{eq:mon-} 
and satisfies the boundary condition \eqref{eq:bou2}.}

\medskip

Some of the features imprinted on the corresponding notion of effective numbers are 
visualized in Fig.~\ref{fig:numbers}.\footnote{Once an ENF is fixed and used to assign totals, 
the conventional ``number of objects'' is replaced by the ``effective number of objects''. 
While we use the term effective number only in this restricted sense here, the underlying
algebraic structure makes the concept similar to standard types of numbers.} 
On the left, natural numbers are shown as a theoretical model for expressing and 
manipulating the quantities of like objects (bags of balls) or of varied objects treated 
as equivalent. The bags containing differing amounts are assigned different discrete points 
on the real axis (natural numbers), with the operation of ``merging the bags'' (\,$\sqcup$\,) 
realized by ordinary addition. Extension to objects distinguished by counting weights is 
shown on the right. Here, the bags assigned equal amounts $\nrN$ by ordinary counting
may be assigned different effective numbers $\efN$, depending on the cumulation of their 
weight distributions. With maximal cumulation ($\delta$-function) 
producing $\efN \!=\! 1$, the effective number continuously and monotonically 
increases as cumulation decreases, reaching $\efN \!=\! \nrN$ when cumulation is absent 
(uniform distribution). The operation of merging bags is represented by the additivity 
property \eqref{eq:add}. Each element of $\Nmaps$, if any, implements a specific
version of this scheme. 
Thus, to assess the conceptual value and practical impact of effective 
numbers, it is necessary to decipher the structure of $\Nmaps$.


\subsection{Effective Counting}
\label{ssec:eff_counting}

It is not difficult to establish that ENFs do exist. For example, one can verify that 
the one parameter family of functions 
\begin{equation}
     \efN_{(\alpha)}[\W] = \sum_{i=1}^\nrN \cf_{(\alpha)}(\w_i)    \quad , \quad
     \cf_{(\alpha)}(\w)  = \min\, \{ \w^\alpha, 1 \}               \quad , \quad
     0 < \alpha \le 1
     \quad
     \label{eq:085}         
\end{equation}
belongs to $\Nmaps$, with $\efN_{(1)} \!=\! \efNm$. However, it is rather remarkable 
that all $\efN \in \Nmaps$ have the additively separable structure of \eqref{eq:085}. 
Indeed, Theorem~\ref{th:eff} (Sec.~\ref{sec:eff_numbers}) implies 
the following central result specifying $\Nmaps$ explicitly.

\bigskip

\noindent 
{\bf Theorem 1.}   {\em Function $\efN$ on $\setW$ belongs to $\Nmaps$ 
if and only if there exists a real-valued function $\cf=\cf(\w)$ on $[0,\infty)$ 
that is concave, continuous, $\,\cf(0)=0\,$, $\,\cf(\w) = 1\,$ 
for $\,\w \ge 1$, and}
\begin{equation}
     \efN[\W]  \,=\,  \sum_{i=1}^\nrN \cf(\w_i)   \quad , \quad
     \forall \,\W \in \setW_\nrN    \quad , \quad
     \forall \,\nrN \quad
     \label{eq:095}         
\end{equation}
{\em Such a function $\cf$ associated with $\efN \in \Nmaps$ is unique.}

\bigskip

\noindent
Thus, there is one-to-one correspondence between ENFs and functions of the 
single variable specified by Theorem~1.\footnote{Note that it suffices to require 
continuity at $\w=0$ since concavity guarantees it elsewhere.} Such $\cf$ 
associated with the given $\efN$ will be referred to as its {\em counting function}.


The necessity of the additively separable form \eqref{eq:095} for ENFs is interesting
conceptually. Indeed, it is common and familiar to represent the ordinary total 
(natural number) by a sequential process of adding a unit amount for each object 
in the collection. According to Theorem~1, this applies to every consistent extension 
to the effective total (effective number), albeit with objects contributing 
weight-dependent amounts specified by the counting function. It thus turns out
that the construction of ENFs generalizes the process of ordinary counting 
to the process of {\em effective counting}.


\subsection{Minimal Effective Number}

A key insight into the nature of effective counting is provided by the following 
results concerning the structure of set $\Nmaps$. They follow directly from 
Theorem~\ref{th:max} in Sec.~\ref{sec:eff_numbers}.  
\bigskip

\noindent {\bf Theorem 2.}   {\em Let $\efNm \!\in\Nmaps$ and $\mN \!\notin\! \Nmaps$ be 
functions on $\setW$ defined by \eqref{eq:015} and \eqref{eq:075}, respectively. Then} 
\begin{displaymath}     
     \begin{aligned}
         (a) \;\, & \;
               \efNm[\W]   \,\le\,  \efN[\W]  \,\le\,  \mN[\W]  \quad\; , \quad\;  
               \forall\, \efN \in \Nmaps \;\;,\;\;  \forall\, \W \in \setW     \\[4pt]
         (b) \;\, & \;      
               \bigl\{ \, \efN[\W]  \,\mid\, \efN \in \Nmaps \, \bigr\}  \;=\;  
               [\, \alpha , \beta \,]   \quad\; , \quad\;
               \alpha=\efNm[\W] \;\;,\;\; \beta=\mN[\W]
               \;\;,\;\;  \forall\, \W \in \setW
      \end{aligned}       
\end{displaymath}     


To elaborate, first note that $(a)$ is the refinement of defining condition \eqref{eq:bou}. 
While the upper bound is intuitive ($\,\mN[\W]$ counts the number of non-zero weights in $\W\,$),
the lower one is unexpected and consequential. In particular, the effective number of objects
weighted by $\W$ cannot be smaller than $\efNm[\W]$. Since $\efNm$ is an ENF, this feature 
is inherent to the concept itself: there is a meaningful notion of the minimal effective number.
In technical terms, $\efNm$ is the least element of function set $\Nmaps$ with respect 
to partial order 
$(\,\efN_1 \le \efN_2\,) \,\Leftrightarrow\, (\,\efN_1[\W] \le \efN_2[\W], \, \forall \, 
\W\in \setW \,)$, and thus a unique ENF with this property.

Part $(b)$ conveys that, for each fixed $\W \!\in\! \setW$, effective counting can be 
adjusted so that $\efN[\W]$ assumes any desired value from the allowed range specified 
by $(a)$. While reflecting a certain degree of arbitrariness built into the concept of effective 
numbers, the associated freedom of choice is in fact quite natural. To illustrate this, 
consider $\nrN$ objects with non-zero weights of very disparate magnitudes so that the collection 
is usefully characterized by an effective number. The insistence on ordinary count in this 
situation constitutes a ``large extrapolation'' since it forces each object to contribute 
equally despite the disparity. According to $(b)$, such extrapolation can be realized by 
a sequence of ENFs that bring the effective total arbitrarily close to $\nrN$. 
Accommodating the needed continuum of consistent schemes can thus be considered a 
useful feature in a framework describing the generalized aspects of counting. 


Note that $(b)$ also confirms an intuitive expectation that there is no maximal ENF since, 
although specifying a supremal value for each $\W$, function $\mN$ does not belong to 
$\Nmaps$. 
Taken together, the results of Theorem~2 form the basis for our canonical solution [A] of 
the quantum identity problem [Q]. The existence of minimal total $\efNm$ is particularly 
consequential in applications of effective numbers. One notable example is that it facilitates 
the notion of minimal (intrinsic) quantum uncertainty~\cite{Hor18B}. 


\section{Effective Number Theory}
\label{sec:eff_numbers}


In this section, we will develop the theory of effective numbers with the requisite mathematical 
detail. The aim is to do this in a self-contained accessible manner using elementary mathematics. 
Certain generalizations regarding the underlying algebraic structure will be elaborated upon in 
a separate mathematical account.


\subsection{Effective Complementary Numbers}


It turns out that there are several practical advantages to carrying out this discussion in terms 
of effective complementary numbers (effective co-numbers) realized by 
functions\footnote{They could also be called the effective dual numbers but 
the duality is not a central property here.}
\begin{equation}
   \F[\P] = N-\G[\P] \quad,\quad  \P\in\Ps_N  \quad,\quad  \G \in \Nmaps
\end{equation}
where $\G=\G[\P]$ are the ENFs introduced in Sec.~\ref{ssec:effnums_def}. Following this
route, we start by the explicit definition of effective co-number functions (co-ENFs) entailed by 
the above relationship.

\begin{definition}\mylabel{def:fun}\setstretch{1.2}
  $\Fs$ is the set of effective co-number functions $\F$, where $\F :\, \Ps \rightarrow \R\,$ have 
  the following properties:
 \phantom{}for all $N,M \in \Z^+$, for all integer $1\le i,j \le N$, $i\ne j$, for all 
 $\P = (\p_1,...,\p_N) \in \Ps_N$, and for all $\Q \in \Ps_M$,

\medskip\noindent
   \/A additivity: $\F[\P \boxplus \Q] = \F[\P] + \F[\Q]$ 
  \\ \/{co-B2} boundary values: $\F(N,0,...,0) = N-1$, where  $(N,0,...,0)\in \Ps_N$ 
  \\ \/C continuity of $\F$ restricted to $ \Ps_N$
    whose topology is inherited from the standard topology on $\R^N$ 
  \\ \/{M$^+$} monotonicity: $0 < \e \le \min\{\p_i,N-\p_j\}$,\ \ $\p_i \le \p_j  \;\Rightarrow\;$
                $\F(...,\p_i,...,\p_j,...) \le \F(...,\p_i-\e,...,\p_j+\e,...)$ 
  \\ \/S symmetry: $\F(...,\p_i,...,\p_j,...) = \F(...,\p_j,...,\p_i,...)$
\end{definition}

\medskip
The following examples will be useful in the course of our analysis.

\begin{example}\mylabel{ex:f_a}
The function $\F_{(\alpha)}[\P] = \sum_i \f_{(\alpha)}(\p_i) = \sum_{\p_i=0} 1 + \sum_{\p_i \in (0,1)} (1 - \p_i^\alpha)$, where
\dsskipp
\begin{align*}
  \f_{(\alpha)}(\p) =
  \begin{cases}
    1,  &  \p = 0
    \\
    1-\p^\alpha,  & 0 < \p < 1 
    \\
    0,  &      1 \le \p
  \end{cases}
\end{align*}
belongs to $\Fs$ for $\alpha \in (0,1]$. This example is complementary to $\G_{(\alpha)}$ introduced in (6).
\end{example}

\begin{example}\mylabel{ex:fmin}
The function $\Fo[\P]=\F_{(0)}[\P]=\sum_{i=1}^N \fo(\p_i)$, where
\dsskipp
\begin{align*}
  \fo(\p) = \f_{(0)}(\p) =
  \begin{cases}
    1,  &  \p=0 
    \\
    0,  &  0 < \p
  \end{cases}
\end{align*}
satisfies Definition \ref{def:fun} except for continuity. Thus, $\Fo \notin \Fs$
is complementary to $\G_+$ in (5).
\end{example}

\begin{example}\mylabel{ex:flin}
The function $\F_\star=\F_{(1)}[\P]=\sum_{i=1}^N \f_\star(\p_i)$, where
\dsskipp
\begin{align*}
  \f_\star(\p) = \f_{(1)}(\p) = 
  \begin{cases}
    1-\p,  &  0 \le \p < 1 
    \\
    0,  &  1 \le \p
  \end{cases}
\end{align*}
satisfies all the properties from Definition \ref{def:fun}, so $\F_\star \in \Fs$.
This co-ENF is complementary to $\G_\star$ in (1).
\end{example}


Due to its repeated use in what follows, it is useful to formalize the following obvious Lemma.

\begin{lemma}\label{le:p1p}\setstretch{1.5}%
  If $\F$ satisfies (A) and (S) then  for all $N$
  \\ \/i $\F[\P] = \F(\p_1,\dottt, \p_{i-1}, 1, \p_{i+1},\dottt, \p_{N}) = \F(\p_1,\dottt, \p_{i-1}, \p_{i+1},\dottt, \p_{N}) + \F(1)$,
  where $\P\in \Ps_N$,
  \\ \/{ii} $\F(1,\dottt, 1) = N\F(1)$, where  $(1,\dottt, 1)\in \Ps_N$.
\end{lemma}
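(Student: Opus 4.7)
The plan is to establish both parts by a direct appeal to symmetry (S) and additivity (A), using the one-entry vector $(1)\in \Ps_1$ as a pivot. Note first that $\Ps_1$ has only one element, namely $(1)$, so $\F(1)$ is well defined as the value of $\F$ on that single vector, and no continuity or monotonicity hypothesis is needed anywhere.

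For (i), given $\P=(\p_1,\dottt,\p_{i-1},1,\p_{i+1},\dottt,\p_N)\in \Ps_N$, I would first use (S) to transpose the entry $1$ from position $i$ to the last position, obtaining $\F[\P] = \F(\p_1,\dottt,\p_{i-1},\p_{i+1},\dottt,\p_N,1)$. Since removing this entry from a vector in $\Ps_N$ leaves coordinates summing to $N-1$, the truncated tuple $(\p_1,\dottt,\p_{i-1},\p_{i+1},\dottt,\p_N)$ lies in $\Ps_{N-1}$, and the permuted vector is precisely the concatenation $(\p_1,\dottt,\p_{i-1},\p_{i+1},\dottt,\p_N)\boxplus (1)$. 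Applying (A) then yields $\F(\p_1,\dottt,\p_{i-1},\p_{i+1},\dottt,\p_N) + \F(1)$, which is the claimed identity.

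For (ii), I would induct on $N$. The base case $N=1$ reads $\F(1)=1\cdot \F(1)$ and is immediate. For the inductive step, assuming $\F(1,\dottt,1)=(N-1)\F(1)$ on $\Ps_{N-1}$, I would write the all-ones vector in $\Ps_N$ as $(1,\dottt,1)\boxplus(1)$ and apply (A) to get $(N-1)\F(1)+\F(1)=N\F(1)$. Equivalently, (ii) is obtained by $N-1$ iterations of (i) starting from the all-ones vector.

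There is no substantive obstacle here; the argument is bookkeeping with $\boxplus$ and permutations. The one point that warrants brief verification is that each truncation really does land in $\Ps_{N-1}$, which follows automatically from the constraint $\sum \p_k = N$ together with the removed entry being exactly $1$.
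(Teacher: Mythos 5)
Your argument is correct and is exactly the bookkeeping the paper has in mind: it declares this lemma ``obvious'' and omits the proof, and your use of (S) to move the unit entry to the end, the observation that the truncated tuple lies in $\Ps_{N-1}$, and the application of (A) to the concatenation with $(1)\in\Ps_1$ is the intended justification. The induction for (ii) is likewise the standard route, so there is nothing to add.
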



\subsection{Separability}

\noindent We now focus on demonstrating the results that will ultimately clarify the content of set $\Fs$ 
and thus of set $\Nmaps$. The main conclusion of the analysis here is that all co-ENFs 
are of an {\em additively separable} form, such as the one exhibited by the family of functions 
$\F_{(\alpha)}$. 
The property of additive separability is defined as follows.\footnote{Unless stated otherwise, 
referencing ``function'' {\em in this section} applies to both real and complex-valued function, and 
referencing ``number'' applies to both real and complex options. }


\begin{definition}\mylabel{df:separable} 
    Additively separable function $\H$ on $\Ps$ is one that can be expressed as 
    \dsskip
    \begin{align}\mylabel{eq:separable}
         \H[\P] = \sum_{i=1}^N \h(\p_i)  \quad , \quad \P\in \Ps_N \quad ,\quad N=1,2,\ldots 
    \end{align}
    where $\h(\p)$ is some function defined on $[0,\infty)$. Function $\h(\p)$ is called a generating function 
    of $\H[\P]$.
\end{definition}

\medskip\noindent
Additively separable $\H$ is generated by infinitely many distinct functions.
However, for co-ENFs, a canonical representative can be singled out that is continuous and 
bounded on $[0,\infty)$ (see Proposition \ref{prop:generating} and Corollary \ref{cor:bounded}).

\smallskip

The relevant insight into additive separability is provided by Lemma~\ref{le:separable} below. 
Before formulating it, let us associate with every $\P \!\in\! \Ps_N$ 
the vector $\P^\uparrow \!\in\! \Ps_N$ obtained by permuting the components of $\P$ into ascending 
order. Furthermore, $\P^\uparrow_<$ will denote a vector obtained from $\P^\uparrow$ by keeping only 
components less than one and removing the rest. Note that for any symmetric function $\F$ on $\Ps$, 
we have $\F[\P]=\F[\P^\uparrow]$ and also that $\P^\uparrow_< \notin \Ps$.  
In Proposition~\ref{prop:monoseparable} we will work with $\P^\uparrow_\le$ with an
analogous meaning. 

\medskip

\begin{lemma}{\bf (Separability)}\label{le:separable} 
  Let $\H$ be a function on $\Ps$ satisfying (A), (S) and the property
  \dsskip
  \begin{align}\label{eq:(<)}
     \forall \,N  \;\;,\;\; \forall\, \P \,,\Q \in \Ps_N \qquad:\qquad   \P^\uparrow_< \,=\, \Q^\uparrow_< 
         \quad\Longrightarrow\quad \H[\P] = \H[\Q]
  \end{align}
  Then, the following statements hold:
  \br7
     \/a $\H$ is additively separable.
  \br7
     \/b  If, in addition, $\H$ is continuous\footnote{Continuity on $\Ps_2$ may appear weaker than $(C)$ continuity, but this lemma shows that they are equivalent.}
     on $\Ps_2$, then there exists a continuous function generating it.
\end{lemma}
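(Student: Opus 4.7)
The plan is to peel the $\ge 1$ components off $\P$ using additivity and $(<)$, concentrating all nontrivial information in a function $\phi$ of the $[0,1)$-components, and then to use tuple-additivity of $\phi$ to reduce it to a sum over single variables.

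First, I would establish the decomposition $\H[\P] = \phi(\P^\uparrow_<) + N c_1$, where $c_1 := \H(1)$. Attaching $(1)\in\Ps_1$ to any $\P \in \Ps_N$ increases $N$ by one and, by (A), increases $\H[\P]$ by $c_1$, while leaving $\P^\uparrow_<$ unchanged. Consequently, for any $\P\in\Ps_N$ and $\Q\in\Ps_{N'}$ with $N \le N'$ and $\P^\uparrow_< = \Q^\uparrow_<$, the vector $\P\boxplus(1,\dottt,1)\in\Ps_{N'}$ (with $N'-N$ appended ones) has the same small components as $\Q$, so $(<)$ and (A) together give $\H[\Q] - \H[\P] = (N'-N)c_1$. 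This makes $\phi(\P^\uparrow_<) := \H[\P] - N c_1$ well-defined as a symmetric function of finite tuples in $[0,1)$, with $\phi(\emptyset) = 0$ obtained from $\P = (1) \in \Ps_1$.

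Applying (A) to $\P\boxplus\Q$ and noting that $(\P\boxplus\Q)^\uparrow_<$ is the multiset union of $\P^\uparrow_<$ and $\Q^\uparrow_<$ gives additivity of $\phi$ under tuple concatenation. Iterating this yields $\phi(q_1,\dottt,q_k) = \sum_{i=1}^k \psi(q_i)$ with $\psi(q) := \phi((q))$ defined on $[0,1)$. Setting
\begin{align*}
    \h(\p) =
    \begin{cases}
        \psi(\p) + c_1, & 0 \le \p < 1 \\
        c_1,            & \p \ge 1
    \end{cases}
\end{align*}
the direct computation $\sum_{i=1}^N \h(\p_i) = \sum_{\p_i<1}\psi(\p_i) + N c_1 = \H[\P]$ delivers (a).

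For (b), the representative $(q, 2-q)\in\Ps_2$ realises $\P^\uparrow_< = (q)$ for $q\in[0,1)$, so $\psi(q) = \H[(q, 2-q)] - 2 c_1$; continuity of $\H$ on $\Ps_2$ transfers to $\psi$, hence to $\h$, on $[0,1)$, while $\h$ is constant on $[1,\infty)$. Continuity across $\p = 1$ follows from $\lim_{q\to 1^-}\H[(q,2-q)] = \H[(1,1)] = 2c_1$ (by (A)), which yields $\lim_{q\to 1^-}\h(q) = c_1 = \h(1)$. The principal technical hurdle is step one: the well-definedness of $\phi$ across varying $N$ is the one place where $(<)$, (A), and (S) must cooperate; the remainder of the proof is essentially bookkeeping.
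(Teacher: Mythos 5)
Your proof is correct and follows essentially the same route as the paper's: both arguments rest on the observation that property \eqref{eq:(<)} reduces $\H[\P]$ to a function of $\P^\uparrow_<$ and $N$, and both extract the generating function from the two-component representative $(q,2-q)$, yielding exactly $\h(q)=\H(q,2-q)-\H(1)$ on $[0,1]$ and the constant $\H(1)$ beyond. Your packaging via the intermediate multiset function $\phi$ and padding with ones replaces the paper's explicit case split ($2m\le N$ versus $2m>N$) but is the same underlying argument, and your continuity check at the gluing point $\p=1$ matches the paper's.
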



\begin{proof} 
  \setstretch{1.2}%
  \/{a} Assuming $\P \ne (1,1,\ldots,1)$, let $\P^\uparrow=(\p_1, \dottt, \p_N) \in\Ps_N$ and
     $\P^\uparrow_< = (\p_1, \dottt, \p_m)$. We will distinguish two cases, namely $2m \le N$ and $2m > N$,
     for which we respectively get by using \eqref{eq:(<)}  
  \begin{eqnarray*}
     \H[\P] = \H(\p_1,\dottt, \p_m, \p_{m+1},\dottt, \p_N)
      = \H(\p_1,\dottt, \p_m, 2-\p_1, \dottt, 2-\p_m, 1,\dottt,1),
     \\ \H[\P\boxplus (1,\dottt,1)]  = \H(\p_1,\dottt, \p_m, \p_{m+1},\dottt, \p_N, 1,\dottt,1)
     = \H(\p_1,\dottt, \p_m, 2-\p_1, \dottt, 2-\p_m),
   \end{eqnarray*}
   with $2-\p_\ell > 1$ for $\ell=1,...,m$. The vectors on the top line (case $2m \le N$) are from $\Ps_{N}$,
   while those on the bottom line (case $2m > N$) are from $\Ps_{2m}$.
   In both cases, Lemma \ref{le:p1p}, symmetry (S), and additivity (A) lead to
   \begin{eqnarray*}
       \H[\P] &=& \H(\p_1, 2-\p_1,\dottt,\p_m, 2-\p_m) \,+\, (N-2m) \, \H(1)
       \\   &=& \H\bigl[ (\p_1, 2-\p_1) \boxplus \dottt \boxplus (\p_m, 2-\p_m) \bigr]  \,+\, (N-2m) \, \H(1)
       \\   &=& \sum_{\ell=1}^m \bigl( \, \vphantom{F^f} \H(\p_\ell,2-\p_\ell) - \H(1) \,\bigr) \,+\, (N-m) \, \H(1).
   \end{eqnarray*}
  Consequently, introducing the generating function
  \dsskipp
  \begin{align}\label{eq:f1}
    \h(x) = 
    \begin{cases}
	\H(x,2-x) - \H(1),  & x\in [0,1]
	\\
        \H(1),  & x \in (1,\infty)
    \end{cases}    
  \end{align}
  facilitates the claimed separability $\H[\P] = \sum_{i=1}^N \h(\p_i)$. 
  Note that for $\P = (1,1,\ldots,1)$, which was initially excluded, 
  the separability holds in the same form. 
\br9
\setstretch{1.1}%
  \/b Given the proof of (a), it is sufficient to show that the continuity of $\H$ on $\Ps_2$ implies 
  the continuity of $\h$ in \eqref{eq:f1}.  For that, one only needs to ascertain the continuity 
  at the gluing point $x=1$, which holds since we have two continuous functions with the same value 
  at the gluing point: $\H(1,2-1) - \H(1) = \H(1)$. 
\end{proof}

\bigskip
We will now demonstrate that all co-ENFs satisfy \eqref{eq:(<)}, and hence, they are additively 
separable.

\begin{proposition}\label{prop:monoseparable}
  All functions $\F \in \Fs$ are additively separable.
\end{proposition}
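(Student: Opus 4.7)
My plan is to invoke Lemma~\ref{le:separable}(a), which reduces the task to verifying that every $\F \in \Fs$ satisfies the hypothesis \eqref{eq:(<)}: $\F[\P] = \F[\Q]$ whenever $\P^\uparrow_< = \Q^\uparrow_<$. Using symmetry (S), I arrange both vectors as $\P_S \boxplus \P_L$ and $\P_S \boxplus \Q_L$, where $\P_S$ is the common small tail (entries $<1$) and $\P_L,\Q_L$ are large tails (entries $\ge 1$) of equal length and equal sum. So the substantive content is that $\F$ is insensitive to how weight is distributed within the $\ge 1$ part.

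For the easier upper bound I combine Lemma~\ref{le:p1p}(ii), which with (co-B2) yields $\F(1)=0$ so that $1$-components may be peeled off freely under (A), with repeated application of (M$^+$). Sorting $\P_L=(\p_1,\ldots,\p_k)$ ascending, I transfer $\p_1-1$ units from $\p_1$ to $\p_k$: this is (M$^+$)-admissible, produces a fresh $1$, and can only increase $\F$. Peeling off that $1$ via Lemma~\ref{le:p1p}(i) and iterating $k-1$ times collapses $\P_L$ to a single entry $A=(N-\sum\P_S)-(k-1)$, giving
\[
  \F[\P] \;\le\; \F\bigl[\P_S\boxplus(A)\bigr] \quad \text{in}\ \Ps_{m+1},
\]
and the same argument for $\Q$ produces the identical right-hand side, so both values are majorized by a common quantity depending only on $\P_S$.

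The main obstacle will be the matching lower bound $\F[\P] \ge \F[\P_S\boxplus(A)]$; monotonicity (M$^+$) is intrinsically one-sided and cannot supply it. My plan is to extract it from additivity via an engineered concatenation: for the elementary move of exchanging a large pair $(a,b)$ by $(1,a+b-1)$ (which generates every admissible redistribution of $\P_L$), I would form $\P \boxplus Z$ for a custom $Z\in\Ps_j$ designed so that the resulting multiset, after symmetric rearrangement, also factors as $(\P_S\boxplus(A))\boxplus W$, where both $\F[Z]$ and $\F[W]$ can be computed independently by $\boxplus$-decomposing them into the base blocks $(\p_i,2-\p_i)\in\Ps_2$ (on which $\F$ equals $\f_\star$-type values) and $(k,0,\ldots,0)\in\Ps_k$ (where (co-B2) fixes $\F=k-1$). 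Equating the two $\boxplus$-factorizations via (A) then yields a linear identity in $\F[\P]$ and $\F[\P_S\boxplus(A)]$ which, together with the earlier upper bound, forces equality; continuity (C) will handle borderline configurations in which a large entry approaches $1$. Iterating the elementary move transforms $\P_L$ into $\Q_L$ one pair at a time, giving $\F[\P]=\F[\Q]$ and establishing \eqref{eq:(<)}.
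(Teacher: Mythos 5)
Your reduction to Lemma~\ref{le:separable}(a) via property \eqref{eq:(<)}, and your upper bound (peel off $1$'s using $\F(1)=0$ from Lemma~\ref{le:p1p} and (co-B2), then use (M$^+$) to pour each large entry down to $1$ and collapse the large tail to a single component $A$) are exactly right and coincide with the paper's argument. The gap is in your lower bound. Your engineered concatenation cannot work as described: if $\P\boxplus Z$ and $\bigl(\P_S\boxplus(A)\bigr)\boxplus W$ are to be permutations of one another, then $W$ must contain the original large entries $a,b,\ldots$ of $\P_L$, which are arbitrary reals $>1$. The axioms pin the value of $\F$ only on blocks $(k,0,\ldots,0)$ with $k$ a positive integer (value $k-1$, by (A) and (co-B2)) and on $(1)$ (value $0$); they do \emph{not} pin $\F$ on a pair $(\p,2-\p)$ --- on the contrary, $\F(\p,2-\p)-\F(1)$ \emph{is} the generating function (cf.\ Eq.~\eqref{eq:f1}) and varies freely over $\Fs$, and for $a>2$ the partner $2-a$ is not even admissible. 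So $\F[W]$ is not computable by decomposition into base blocks, and the proposed linear identity is circular. Continuity, which you invoke for borderline cases, does not repair this and is in fact not needed anywhere in this proposition.

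The missing idea is to use (M$^+$) a second time, now in the \emph{decreasing} direction, with an appended block whose value \emph{is} pinned. Adjoin $\sum_j(\lceil\p_j\rceil-1)$ zeros together with one integer entry $1+\sum_j(\lceil\p_j\rceil-1)$; by (A) and (co-B2) this costs exactly the known constant $\sum_j(\lceil\p_j\rceil-1)$. Then repeatedly transfer mass \emph{from} that large integer entry \emph{into} each $\p_j$ until $\p_j$ reaches $\lceil\p_j\rceil$: each such transfer moves weight from a larger component to a smaller one, so reading (M$^+$) backwards gives $\F[\text{before}]\ge\F[\text{after}]$. The resulting configuration has all large entries integral, hence splits by (A), (S) and (co-B2) into pinned integer blocks plus $\P_S\boxplus(A)$ with the same $A=1-n+\sum_j\p_j$ as in your upper bound. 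The constants cancel and you obtain $\F[\P]\ge\F[\P_S\boxplus(A)]$, which together with your upper bound forces equality and establishes \eqref{eq:(<)}. With that repair your outline becomes the paper's proof.
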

\begin{proof} 
  Because of symmetry (S), we will without loss of generality assume $\P=\P^\uparrow$, i.e. $\P$ is in ascending order.
  For $\P = (1,1,\ldots,1)$ the implication in \eqref{eq:(<)} is vacuously true\footnote{$\P^\uparrow_<$ is undefined here.},
  so $\P \ne (1,1,\ldots,1)$ is assumed in what follows.
  We will use index $\ell$ to label the elements of $\P^\uparrow_\le$ and index $j$ for the rest
  of the entries in $\P$. Hence,
  $\p_\ell \le 1 < \p_j$ for $\ell=1,...,m$, $\,j=m+1,...,m+n=N$, and $n=\sum_j 1$. 
    Then, by monotonicity (M$^+$), Lemma \ref{le:p1p}, and $\F(1)\!=\!0$, which is the $N\!=\!1$ case of (co-B2), we have:
  \begin{eqnarray*}
     \F(\dottt \p_\ell \dottt,\; \dottt \p_j \dottt)
     &\le \F(\dottt \p_\ell \dottt,\; \p_{m+1}-\e, \p_{m+2}, \dottt,\hspace{8mm} &   \p_{m+n-1}, \p_{m+n}+\e)\;\;,\;\,
       \mathrm{\ where\ } \e=\p_{m+1}-1
     \\ &=\F(\dottt \p_\ell \dottt,\; 1,\phantom{xxxxxi} \p_{m+2}, \dottt,\hspace{8mm}&  \p_{m+n-1}, 1+(\p_{m+n}-1)+(\p_{m+1}-1))
     \\ &\le \F(\dottt \p_\ell \dottt,\; 1,\phantom{xx} \p_{m+2}-\e, \p_{m+3}, \dottt,&  \p_{m+n-1}, 1+(\p_{m+n}-1)+(\p_{m+1}-1)+\e)
     \\ &\le \F\left( \dottt \p_\ell \dottt,\; 1,\dottt,1,\; 1 + \sum (\p_j-1) \right)&
     \\ &= \F\left( \dottt \p_\ell \dottt,\;\phantom{xxxxxii} 1-n + \sum \p_j \right).&
  \end{eqnarray*}
  The opposite inequality follows from additivity (A), (co-B2), and (M$^+$).
  With $\lceil x \rceil$ denoting the ceiling function, we start with $\sum (\lceil \p_j\rceil - 1)$ zeroes:
  \begin{eqnarray*}
     \F(\dottt \p_\ell \dottt,\; \dottt \p_j \dottt)
     = \F\left( \dottt \p_\ell \dottt,\; \dottt \p_j \dottt,\; 0,\dottt, 0,\; 1+\sum_{j=m+1}^{m+n} (\lceil \p_j\rceil - 1) \right)
       - \sum_{j=m+1}^{m+n} (\lceil \p_j\rceil - 1)
     \\ \ge \F\left( \dottt \p_\ell \dottt,\; \p_{m+1}+\e, \p_{m+2}, \dottt, \p_{m+n},\;  0,\dottt, 0,\; 1+\sum (\lceil \p_j\rceil - 1) -\e \right)
       - \sum (\lceil \p_j\rceil - 1)
     \\ \ge \F\left( \dottt \p_\ell \dottt,\; \lceil \p_{m+1}\rceil, \dottt, \lceil \p_{m+n}\rceil,\; 0,\dottt, 0,\;
         1 + \sum (\lceil \p_j\rceil - 1) - \sum (\lceil \p_j\rceil - \p_j) \right)
       - \sum (\lceil \p_j\rceil - 1)
     \\ = \F\left( \dottt \p_\ell \dottt,\; 1 + \sum\!{\raisebox{-2mm}{\footnotesize\it j}} (-1) - \sum (-\p_j) \right)
       + \sum \F(0,\dottt, 0,\; \lceil \p_j\rceil) - \sum (\lceil \p_j\rceil - 1)
     \\  = \F(\dottt \p_\ell \dottt,\; 1-n + \sum \p_j).
  \end{eqnarray*}
  The resulting equality implies\footnote{ Notice that $\sum \p_j = m+n - \sum \p_{\ell}$.}
  the property~\eqref{eq:(<)} upon recalling
  that the value of $\F$ does not change when removing the entries $\p_\ell \!=\! 1$ from $\P$, 
  so that $\P^\uparrow_\le$ becomes $\P^\uparrow_<$.
  Moreover, since (A) and (S) are among the defining properties of co-ENFs, $\F$ is additively separable by Lemma~\ref{le:separable}~(a).
\end{proof}

\begin{corollary}\label{cor:(<)}
   All functions $\F \in \Fs$ satisfy \eqref{eq:(<)}.
\end{corollary}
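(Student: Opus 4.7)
My plan is to deduce \eqref{eq:(<)} as an immediate consequence of Proposition~\ref{prop:monoseparable} combined with a short calculation of the generating function on $[1,\infty)$. By Proposition~\ref{prop:monoseparable}, any $\F \in \Fs$ is additively separable, i.e.\ $\F[\P] = \sum_{i=1}^N \h(\p_i)$, where $\h$ can be taken to be the explicit function from \eqref{eq:f1} in the proof of Lemma~\ref{le:separable} (with the generic $\H$ there replaced by $\F$).

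The key step is to observe that this $\h$ vanishes identically on $[1,\infty)$. For $x > 1$, this is immediate from the lower branch of \eqref{eq:f1} together with the $N\!=\!1$ case of (co-B2), which forces $\F(1) = 0$. At the endpoint $x = 1$, Lemma~\ref{le:p1p}(ii) with $N=2$ gives $\F(1,1) = 2\F(1) = 0$, so from the upper branch $\h(1) = \F(1,1) - \F(1) = 0$ as well. With $\h \equiv 0$ on $[1,\infty)$ in hand, the hypothesis $\P^\uparrow_< = \Q^\uparrow_<$ yields $\F[\P] = \F[\Q]$ at once: after using symmetry to reorder, split $\sum_i \h(\p_i)$ into entries $\ge 1$ (which contribute nothing) and entries strictly below $1$ (which form identical multisets for $\P$ and $\Q$ by assumption), and likewise for $\Q$.

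I do not anticipate any real obstacle here: the substantive content is already embedded in Proposition~\ref{prop:monoseparable}, whose chain of inequalities in effect reduces $\F[\P]$ to a function of $\P^\uparrow_<$ alone, once one notes that entries equal to $1$ can be dropped by Lemma~\ref{le:p1p}(i) and that $\sum_i \p_i = N$ fixes the value $1 - n + \sum_j \p_j$ appearing there in terms of $\P^\uparrow_<$ only. The present route simply repackages this observation through the already-established additive separability, which I find the cleanest presentation.
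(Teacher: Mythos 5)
Your computation is fine: the function in \eqref{eq:f1} does vanish on $[1,\infty)$ (since $\F(1)=0$ by the $N\!=\!1$ case of (co-B2) and $\F(1,1)=2\F(1)=0$ by Lemma~\ref{le:p1p}), and once one has a generating function of $\F$ that is zero on $[1,\infty)$, the implication in \eqref{eq:(<)} follows by splitting the sum as you describe. The problem is the direction of the logic. The statement of Proposition~\ref{prop:monoseparable} only asserts the \emph{existence} of some generating function, and that alone does not give \eqref{eq:(<)}: one would need to know that the generating function is affine on $[1,\infty)$, and generic separability provides no such control. Your fix is to take ``the explicit function from \eqref{eq:f1}'', but the only thing that licenses \eqref{eq:f1} as a generating function of $\F$ is Lemma~\ref{le:separable}(a) --- whose hypothesis is precisely \eqref{eq:(<)}. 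So, read as a self-contained derivation of the corollary from Proposition~\ref{prop:monoseparable}, your argument is circular; read as a repackaging, it re-derives a fact that was already an input.

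The paper's actual content for this corollary is the intermediate result of the proof of Proposition~\ref{prop:monoseparable}: the two chains of (M$^+$), (A), (co-B2) estimates show directly that $\F[\P] = \F\bigl(\dottt\, \p_\ell \,\dottt,\; 1-n+\sum_j \p_j\bigr)$, i.e.\ that $\F[\P]$ is a function of $\P^\uparrow_\le$ (hence of $\P^\uparrow_<$, after dropping entries equal to $1$ via Lemma~\ref{le:p1p}) alone; \emph{that} is the proof of \eqref{eq:(<)}, and it is completed \emph{before} Lemma~\ref{le:separable}(a) is invoked to get separability. You clearly see this --- your closing paragraph says as much --- so the fix is only presentational: cite the chain of inequalities in the proof of Proposition~\ref{prop:monoseparable} as establishing \eqref{eq:(<)} directly, rather than routing through the separable form, whose availability depends on \eqref{eq:(<)} in the first place.
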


\bigskip

Next, we investigate the non-uniqueness of the generating function, which requires some groundwork to begin with. 

\begin{lemma}\label{le:mx+mz}  In real numbers, the following two statements are true:
  \dsskip
  \begin{align*}
    \/{i}\quad  & (\forall a) (\forall b>a) (\forall x\ne 1)   \quad(\exists m,n\in\Z) (\exists z\in [a,b])   \quad(mx + nz = m+n)
 \\ \/{ii}\quad & (\forall a>1) (\forall b>a) \;\;(\exists B)\;\;  \bigl( \forall x\in \bigl[ 0,{\textstyle\frac{1}{2}} \bigr] \bigr)
             \;\;(\exists m,n\in\Z^+) (\exists z\in [a,b]) \;\;\Bigl( mx + nz = m+n, \,\text{ and }\, \frac{n}{m}\le B \Bigr)
   \end{align*}
 \end{lemma}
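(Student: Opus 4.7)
My plan is to reduce both statements to a density argument for rationals inside an explicit interval. The key algebraic rewriting is
\begin{equation*}
  mx + nz = m + n \;\Longleftrightarrow\; m(1 - x) = n(z - 1),
\end{equation*}
which, for $x \ne 1$ and $n \ne 0$, is the same as $m/n = (z-1)/(1-x)$. So in both parts I will look for a suitable rational $m/n$ inside the image $\phi_x([a,b])$ of the affine map $\phi_x \colon z \mapsto (z-1)/(1-x)$, and then recover $z = 1 + (m/n)(1-x) \in [a,b]$.

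For part (i), I would fix $a$, $b>a$, and $x \ne 1$ and observe that $\phi_x([a,b])$ is a closed interval of positive length $(b-a)/|1-x|$, its orientation determined by the sign of $1-x$. Density of $\QQ$ in $\R$ then supplies some $m/n$ inside it with $n \in \Z^+$ and $m \in \Z$; the corresponding $z = 1 + (m/n)(1-x)$ lies in $[a,b]$ and satisfies $mx + nz = m + n$ by construction.

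For part (ii), I would assume $a>1$ and restrict $x$ to $[0, 1/2]$, whence $1 - x \in [1/2, 1]$ and
\begin{equation*}
  I(x) \,=\, \left[\, \tfrac{a-1}{1-x},\; \tfrac{b-1}{1-x} \,\right] \;\subseteq\; [\, a-1,\; 2(b-1) \,]
\end{equation*}
has length $(b-a)/(1-x) \ge b - a > 0$. I would then fix once and for all a positive integer $n \ge \lceil 1/(b-a) \rceil$, so that $1/n \le b - a \le |I(x)|$ uniformly in $x$, guaranteeing some $m/n \in I(x)$ with $m \in \Z$. Positivity of the lower endpoint of $I(x)$, which is at least $a - 1 > 0$, forces $m \in \Z^+$, and the same bound gives $m/n \ge a - 1$, hence $n/m \le 1/(a-1)$. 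Setting $B := 1/(a-1)$, independent of $x$, would complete the proof.

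\textbf{Main obstacle.} There is little genuine difficulty once the equation has been rewritten as $m(1-x) = n(z-1)$. The only care I expect to need is in tracking the sign of $1-x$ in part (i), which merely flips the orientation of $\phi_x([a,b])$ for $x > 1$ without affecting the density argument, and in part (ii) in verifying that the denominator $n$ may be chosen uniformly in $x$ --- which succeeds because the lower bound $|I(x)| \ge b - a$ is uniform on $[0, 1/2]$.
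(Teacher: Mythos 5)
Your proposal is correct and follows essentially the same route as the paper's proof: both rewrite the equation as $m/n = (z-1)/(1-x)$, fit a rational $m/n$ into the interval with endpoints $\frac{a-1}{1-x}$ and $\frac{b-1}{1-x}$ by density, recover $z = 1 + (1-x)\frac{m}{n}$, and for part (ii) use the uniform length bound $(b-a)/(1-x) \ge b-a$ to fix $n = \lceil 1/(b-a)\rceil$ and $B = 1/(a-1)$. Your explicit handling of the sign of $1-x$ in part (i) is a small clarification the paper leaves implicit, but the argument is the same.
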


\begin{proof}
  \/{i} Fit $\frac{m}{n}$ between $\frac{a-1}{1-x}$ and $\frac{b-1}{1-x}$ using the density of rationals in $\R$.
  Then, to get the equality, choose $z=1+ (1-x)\frac{m}{n}$.

\medskip\noindent
  \/{ii}  Choose $B= \frac{1}{a-1}$ and
  $n = \lceil \frac{1}{b-a} \rceil$, then
  \begin{equation*}
    \frac{1}{n} \le b-a \le \frac{b-a}{1-x} = \frac{b-1}{1-x} - \frac{a-1}{1-x}
  \end{equation*}
  since  $0 \le x < 1$.
  Hence, there exists $m$ to fit $\frac{m}{n}$ between $\frac{a-1}{1-x}$ and $\frac{b-1}{1-x}$.
  Then, again, choose  $z=1+ (1-x)\frac{m}{n}$ to get the equality.
  Moreover, 
  \begin{equation*}
    (a-1) \le \frac{a-1}{1-x} \le \frac{m}{n} \quad\text{and so}\quad \frac{n}{m} \le \frac{1}{a-1} = B.
  \end{equation*}
  This concludes the proof.
\end{proof}

\medskip

\begin{lemma}\label{le:generating}\setstretch{1.05}
  Let $\H$ be an additively separable function on $\Ps$
  and $\h,\h_1,\h_2$ its generating functions. Then,
\br7
  \/i $\h(\p) + (1-\p)K$ is also a generating function of $\H$ for every  number $K$,
\br7
  \/{ii} if $\h_1(\p)-\h_2(\p)$ is bounded
  on some interval $[a,b],\; 0\le a < b$,
    then $\h_1(\p)-\h_2(\p) = (1-\p)K_0$ for some  number $K_{0}\ $ and all $\p \in [0,\infty)$.
\end{lemma}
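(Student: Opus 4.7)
Part (i) is immediate: for any $\P \in \Ps_N$ the constraint $\sum_i \p_i = N$ gives $\sum_i (1 - \p_i) = 0$, so $\sum_i [\h(\p_i) + (1-\p_i)K] = \H[\P]$ and $\h + (1-\cdot)K$ also generates $\H$.

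For part (ii), set $g := \h_1 - \h_2$. Since $\h_1$ and $\h_2$ generate the same $\H$, the vanishing-sum identity $\sum_{i=1}^N g(\p_i) = 0$ holds for every $N$ and every $\P \in \Ps_N$. Specializing $\P$ extracts the key relations: $\P = (1,\dots,1)$ yields $g(1) = 0$; the case $N = 2$ gives $g(\p) + g(2-\p) = 0$ on $[0,2]$; and the $N = 3$ identity $g(\p_1) + g(\p_2) + g(3-\p_1-\p_2) = 0$, combined with the previous symmetry, produces the Cauchy-like equation $g(\p_1) + g(\p_2) = g(\p_1 + \p_2 - 1)$ whenever $\p_1, \p_2 \ge 0$ and $\p_1 + \p_2 \in [1,3]$. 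After the shift $\tilde g(y) := g(1+y)$, this becomes the ordinary Cauchy equation on the domain $\{y_1, y_2 \ge -1,\ y_1 + y_2 \in [-1,1]\}$, together with oddness $\tilde g(-y) = -\tilde g(y)$ on $[-1,1]$.

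Next I would propagate the given boundedness of $g$ on $[a,b]$ to a neighborhood of $\p = 1$. By replacing $[a,b]$ with its image under $\p \mapsto 2 - \p$ if necessary (using $g(\p) = -g(2-\p)$) and shrinking slightly, I may assume $[a,b] \subset (1,\infty)$ with $a > 1$, as required by Lemma~\ref{le:mx+mz}(ii). That lemma then produces, for each $\p \in [0, 1/2]$, positive integers $m, n$ with $n/m \le B$ and $z \in [a,b]$ satisfying $m\p + nz = m+n$. The vector with $m$ copies of $\p$ and $n$ copies of $z$ lies in $\Ps_{m+n}$, so the vanishing-sum identity gives $m\,g(\p) + n\,g(z) = 0$, hence $|g(\p)| \le B \sup_{[a,b]} |g|$. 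Thus $g$ is bounded on $[0, 1/2]$, and via the halving relation $\tilde g(y) = 2 \tilde g(y/2)$ (the Cauchy-like equation with $y_1 = y_2$) together with oddness this boundedness extends to a full neighborhood of $0$ for $\tilde g$.

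A bounded additive function on a neighborhood of $0$ is linear by the classical Darboux argument, so $\tilde g(y) = -K_0 y$ on some interval around $0$. I would then extend the identity to all $y \ge -1$ by iterating the $\Ps_N$ relations: the $N = 3$ choice $\P = (1+y, 0, 2-y)$ yields $\tilde g(y) = \tilde g(1) + \tilde g(y-1)$ on $(1,2)$, and analogous vectors with growing numbers of zero components cover successive intervals $[k, k+1)$. Translating back gives $g(\p) = K_0(1-\p)$ on $[0,\infty)$, as claimed. The main delicacy I expect is the preprocessing of $[a,b]$ and the bookkeeping of the domains on which each functional equation is valid --- in particular, arranging $a > 1$ before invoking Lemma~\ref{le:mx+mz}(ii) and ensuring the halving-then-oddness step really delivers boundedness on a two-sided neighborhood of $0$.
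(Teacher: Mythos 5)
Your proof is correct, and while it shares two key ingredients with the paper's argument --- the reflection identity $g(2-\p)=-g(\p)$ coming from $N=2$, and the use of Lemma~\ref{le:mx+mz}(ii) to propagate the boundedness of $g$ from $[a,b]$ (after arranging $a>1$) down to $[0,\tfrac12]$ --- the way you finish part (ii) is genuinely different. You extract from the $N=3$ identity a restricted Cauchy equation $\tilde g(y_1)+\tilde g(y_2)=\tilde g(y_1+y_2)$ for the shifted function $\tilde g(y)=g(1+y)$, transport the boundedness to a two-sided neighborhood of $0$ via oddness and halving, invoke the classical theorem that a bounded additive function is linear, and then extend by induction using vectors padded with zeros. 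The paper never forms a functional equation: for each fixed $\p>1$ it plugs the single vector $(\p,\dots,\p,0,\dots,0,x,x)\in\Ps_{\lceil k\p\rceil}$ (with $k$ copies of $\p$ and $x=\tfrac12(\lceil k\p\rceil-k\p)\in[0,\tfrac12]$) into the vanishing-sum identity and lets $k\to\infty$, so that the bound $4BA/k$ on the error term forces $g(\p)=(1-\p)g(0)$ outright; the case $\p<1$ then follows by reflection. Your route is more conceptual --- it names the underlying additive structure --- but it leans on the local (restricted-domain) version of ``bounded additive implies linear,'' which is standard (e.g.\ via the Dar\'oczy--Losonczi extension of a locally additive function to an additive function on $\R$) but would need to be cited or proved to make the argument as self-contained as the paper's; it also requires the domain bookkeeping you flag. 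The paper's route is entirely elementary and avoids the functional-equation machinery at the cost of a more ad hoc construction. Both are valid; neither has a gap that would invalidate the conclusion.
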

\begin{proof} 
  \/i This follows from $\sum_{i=1}^N (1-\p_i) = 0$ for all $\P=(...\p_i...)\in \Ps_N$.
\br7
  \/{ii}  
  Setting $\hh=\h_1 - \h_2$ gives the following equation:
\dsskip
  \begin{equation}\label{eq:H=0}
    \sum \hh(\p_i) = 0 \mathrm{\quad for\ all\ } N  \mathrm{\ and\ all\ } \P=(\p_1,\dottt,\p_N).  
  \end{equation}
  We will show that 
  \begin{equation}\label{eq:h=linear}
    \hh(\p) = (1 - \p)\,\hh(0)  \mathrm{\quad for\ all\ } \p\in [0,\infty).
  \end{equation}
  Case $\p=1$. We get $\hh(1)=0$ from \eqref{eq:H=0} for $N=1$, and then, \eqref{eq:h=linear} is satisfied for $\p=1$.
\br7
\setstretch{1.1}%
  Case $\p > 1$.
  Without loss of generality, we can assume that $1\notin [a,b]$.
  Moreover, if $0\le a < b < 1$, then $\hh$ is bounded also on $[2-b,\, 2-a]$
  because $\hh(2-\p) = - \hh(\p)$, which follows from \eqref{eq:H=0} when $N=2$.
  As a result, we can assume without loss of generality even that $1<a<b$.
  Under this assumption, we will first show that $\hh$, bounded on $[a,b]$, must be bounded also on $[0,\frac{1}{2}]$.
  According to Lemma \ref{le:mx+mz}(ii) we have
    \dsskip\[ (\forall a>1) (\forall b>a) \quad(\exists B)\quad \bigl( \forall x\in \bigl[ 0,{\textstyle\frac{1}{2}} \bigr] \bigr)
           \quad(\exists m,n\in\Z^+) (\exists z\in [a,b])  \quad\Bigl(mx + nz = m+n, \,\text{ and }\; \frac{n}{m}\le B \Bigr) \]
 Now, we choose $\P=(x,\dottt,x,z,\dottt,z) \in \Ps_{m+n}$
 in Equation \eqref{eq:H=0},
 where $x$ repeats $m$ times and $z$ repeats $n$ times.
 Then
   \[  m\hh(x)+n\hh(z)=0 \quad\text{and so}\quad |\hh(x)| =  \left| \frac{-n}{m}\hh(z) \right|  \le BA\quad , \quad\text{where} \]
 $A$ is a bound for $|\hh|$ on $[a,b]$.
 Thus we have $\hh$ bounded on $[0,\frac{1}{2}]$ by $BA$.

\medskip\noindent\setstretch{1.1}%
{\def\z{x}%
  Now, suppose by contradiction that there is $\p > 1$ such that $\hh(\p) \ne (1 - \p)\,\hh(0)$.
  Let $k$ be an integer large enough so that
  \dsskip
  \begin{equation}\label{cond:k}
     N - k - 2 = \lceil k\p \rceil - k - 2 \ge 0 \quad\text{and}\quad
     \frac{4BA}{k} < \bigl| \hh(c) - (1-c)\hh(0) \bigr|
  \end{equation}
  and set $N=\lceil k\p \rceil $ , $\z = \frac{1}{2}(N - k\p)$.
  Then, $k\p + 2 \z = N$ and $\z \in [0,\frac{1}{2}]$.
  Choosing $\P=(\p,\dottt,\p,0,\dottt,0,\z,\z)\in \Ps_N$ in \eqref{eq:H=0}, 
  where $\p$ repeats $k$ times, will produce 
  \dsskip\[ k\hh(\p) + (N - k - 2)\,\hh(0) + 2 \hh(\z) = 0 \]
  and the first condition in \eqref{cond:k} ensures that $\P$ will not have a negative number of zeroes.
  Given that $N = k\p + 2 \z$, we get
  \dsskip
  \begin{align*}
     \hh(\p) + \left( \p-1 + \frac{2 \z}{k} - \frac{2}{k} \right)\,\hh(0) + \frac{2 \hh(\z)}{k} = 0
  \end{align*}
  \begin{align}\label{eq:epsilon}
    \hh(\p) = (1 - \p)\,\hh(0) + \e,
  \end{align}
  where we set $ \e = - \frac{2}{k} \left( (\z-1)\,\hh(0) + \hh(\z) \right) $.
  Since $k\e$ is bounded by $4BA$  
  and from the second condition in \eqref{cond:k}, we have
  \begin{align*}
    0 < |\e| \le \frac{4BA}{k}
      < \bigl| \hh(\p) - (1 - \p)\,\hh(0) \bigr| ,
  \end{align*}
  which is a contradiction with \eqref{eq:epsilon}.
  Thus, the original assumption $\hh(\p) \ne (1 - \p)\,\hh(0)$ failed, and \eqref{eq:h=linear} holds for $\p>1$.
}

\medskip\noindent
  Case $\p < 1$.
  We can use the previous case for $2-\p > 1$ and get $\hh(2-\p) = (1 - (2-\p))\,\hh(0)$.
  This equation transforms into \eqref{eq:h=linear}
  since we already know that  $\hh(2-\p) = -\hh(\p)$,  
  and thus, \eqref{eq:h=linear} holds for $\p < 1$ as well.

\medskip\noindent
  We have shown that if $\hh$ satisfies \eqref{eq:H=0}, then it satisfies \eqref{eq:h=linear}.
  Finally, setting $K_0=\hh(0)$ completes the proof.
  %
\end{proof}

\begin{proposition}\label{prop:generating}
\br7
   Let $\F \in \Fs$. Then for each number $t$, there is a unique generating function $\f$ of $\F$  
   that is continuous and $\f(0)=t$.
\end{proposition}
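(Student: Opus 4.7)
The plan is to assemble the statement from the machinery already built up. By Proposition \ref{prop:monoseparable}, any $\F\in\Fs$ is additively separable, and by Lemma \ref{le:separable}(b), applied to $\H = \F$ (whose continuity on $\Ps_2$ is guaranteed by the defining property (C)), there exists at least one \emph{continuous} generating function $\f_0$ of $\F$. This is the only place where the analytic structure of $\F$ enters; the rest is algebraic manipulation using Lemma \ref{le:generating}.

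For existence, I would use Lemma \ref{le:generating}(i) with the constant $K = t - \f_0(0)$ to define
\[
    \f(\p) \,=\, \f_0(\p) + (1-\p)\,\bigl( t - \f_0(0) \bigr).
\]
This is continuous on $[0,\infty)$ as a sum of continuous functions, generates the same $\F$ by Lemma \ref{le:generating}(i), and satisfies $\f(0) = \f_0(0) + (t-\f_0(0)) = t$, as required.

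For uniqueness, suppose $\f_1$ and $\f_2$ are both continuous generating functions of $\F$ with $\f_1(0) = \f_2(0) = t$. Their difference $\hh = \f_1 - \f_2$ is continuous on $[0,\infty)$, hence bounded on the compact interval $[0,1]$. Lemma \ref{le:generating}(ii) then yields $\hh(\p) = (1-\p)K_0$ for some constant $K_0$ and all $\p \in [0,\infty)$. Evaluating at $\p=0$ gives $K_0 = \hh(0) = t - t = 0$, so $\f_1 \equiv \f_2$.

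I do not foresee a genuine obstacle here, since the heavy lifting (separability and the characterization of non-uniqueness of generating functions up to the affine term $(1-\p)K$) has already been done in Propositions \ref{prop:monoseparable} and in Lemma \ref{le:generating}. The only minor point worth double-checking is that Lemma \ref{le:separable}(b) genuinely applies: it requires continuity on $\Ps_2$, which is supplied by property (C) of Definition \ref{def:fun}, so there is no gap.
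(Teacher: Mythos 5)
Your proof is correct and follows essentially the same route as the paper: existence of one continuous generating function via Lemma \ref{le:separable}(b) (the paper cites Corollary \ref{cor:(<)} to supply the hypothesis \eqref{eq:(<)}, which your appeal to Proposition \ref{prop:monoseparable} implicitly covers), the affine shift of Lemma \ref{le:generating}(i) to hit any prescribed value at $0$, and Lemma \ref{le:generating}(ii) for uniqueness. No gaps.
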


\begin{proof}
 The existence of one continuous generating function, not necessarily satisfying $\f(0)=t$, follows from 
 Corollary \ref{cor:(<)} and Lemma \ref{le:separable}(b). Then part (i) of Lemma \ref{le:generating} implies that 
 there is at least one continuous generating function for an arbitrary value of $t=\f(0)$. 
 To show the uniqueness of such a function for every $t$, assume that there are two continuous generating 
 functions $\f_1$ and $\f_2$, such that $\f_1(0)=\f_2(0)$. Since $\f_1(\p) - \f_2(\p)$ is then bounded on any 
 finite interval due to continuity, we can use (ii) of Lemma \ref{le:generating} to infer that 
 $0 = \f_1(0)-\f_2(0) = K_0$.  Using (ii) of Lemma \ref{le:generating} again, we finally conclude 
 $\f_1(\p)=\f_2(\p)$, as claimed. 
  
\end{proof}


\subsection{Description and Structure of Co-ENFs}

\noindent
The separability results of the previous section give us access to the content and the structure of set $\Fs$, 
ultimately providing a key insight into the concept of effective (co-)numbers. We start with the following
proposition\footnote{The claim (i) of Proposition \ref{prop:convex} is likely to be known 
in the context of majorization but we haven't found a suitable reference.} 

\begin{proposition}\label{prop:convex}
\br7
  \/i Let $\H$ be a real additively separable function defined on $\Ps$.
    $\H$ is continuous (C) and monotone (M$^+$) if and only if it can be generated by a function $\h(\p)$ that is 
    continuous at $\p=0$ and convex.
\br7
  \/{ii} If $\F \in \Fs$ 
  then all its continuous generating functions $\f(\p)$ are convex. 
\end{proposition}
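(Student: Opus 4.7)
The plan is to establish (i) first and then derive (ii) as a short corollary using Lemma \ref{le:generating}(ii). For the reverse direction of (i), assume $\h$ is continuous at $0$ and convex on $[0,\infty)$. Then $\h$ is automatically continuous on all of $[0,\infty)$, so $\H[\P] = \sum_i \h(\p_i)$ is continuous on each $\Ps_N$, giving (C). For (M$^+$), given $0 \le u \le v$ and $\e \in (0,u]$, write $u = \alpha(u-\e) + (1-\alpha)(v+\e)$ and $v = (1-\alpha)(u-\e) + \alpha(v+\e)$ with $\alpha = (v-u+\e)/(v-u+2\e) \in [0,1]$; two applications of convexity and addition yield $\h(u) + \h(v) \le \h(u-\e) + \h(v+\e)$, which is the required monotonicity inequality for the two moved entries (the unchanged entries cancel out).

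For the forward direction of (i), my candidate is the explicit generating function from Lemma \ref{le:separable}, namely $\h(x) = \H(x, 2-x) - \H(1)$ for $x \in [0,1]$ and $\h(x) = \H(1)$ for $x > 1$. To justify invoking Lemma \ref{le:separable}(a,b), I need (A), (S), (C), and the separability property \eqref{eq:(<)}. The first three are in hand, and to obtain \eqref{eq:(<)} I would first normalize to $\H(1) = 0$ by replacing $\H$ with $\H - N \cdot \H(1)$ on $\Ps_N$; this amounts to replacing any generator $\h_0(\p)$ by $\h_0(\p) - \H(1)\cdot\p$ (since $\sum_i \p_i = N$), so the shifted function remains additively separable, continuous, and monotone. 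Under this normalization the chain of (M$^+$) estimates in the proof of Proposition \ref{prop:monoseparable} goes through verbatim and establishes \eqref{eq:(<)}, so Lemma \ref{le:separable} yields a continuous generating function $\h$. For convexity, I would apply (M$^+$) to vectors $(u, v, 1, \ldots, 1) \in \Ps_N$ for large $N$ to deduce $\h(u) + \h(v) \le \h(u-\e) + \h(v+\e)$ for all $0 \le u \le v$ and $\e \in (0,u]$; the case $u = v$ is midpoint convexity, which combined with continuity forces convexity of $\h$ on $[0,1]$. On $(1,\infty)$ the function is constant, and at the glue point $x=1$ one checks that the left derivative is $\le 0$ (from $\h \ge \h(1)$ on $[0,1]$) while the right derivative equals $0$, so convexity extends across. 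Finally, undoing the normalization adds back the affine function $\H(1) \cdot \p$, which preserves both continuity and convexity.

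For (ii), part (i) supplies a continuous convex generating function $\f^\star$ of $\F \in \Fs$. Given any other continuous generating function $\f$, the difference $\f^\star - \f$ is continuous on $[0,\infty)$ and hence bounded on every finite interval; Lemma \ref{le:generating}(ii) then forces $\f^\star(\p) - \f(\p) = K_0(1-\p)$ for some constant $K_0$, so $\f = \f^\star - K_0(1-\p)$ is convex as a sum of a convex and an affine function. The main obstacle is the verification of \eqref{eq:(<)} for an additively separable $\H$ satisfying only (A), (S), (C), (M$^+$): Proposition \ref{prop:monoseparable} implicitly used the co-ENF value $\F(1) = 0$ from (co-B2), so the normalization $\H(1) = 0$ must be justified carefully and one must check that the resulting $\h$ remains a valid convex generating function after the affine un-shift.
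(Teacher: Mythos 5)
Your reverse direction of (i) is correct: the two convex-combination identities with $\alpha=(v-u+\e)/(v-u+2\e)$ give $\h(u)+\h(v)\le\h(u-\e)+\h(v+\e)$ directly, a clean alternative to the paper's device of replacing $\h$ on $[\p_i,\p_j]$ by its chord. Your derivation of midpoint convexity from (M$^+$) with $\p_i=\p_j$, and the logic of part (ii) (one convex continuous generator plus Lemma~\ref{le:generating}(ii) forces all continuous generators to be convex), are also sound in themselves.

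The gap is in your route to a \emph{continuous} generating function in the forward direction of (i). You claim that after normalizing to $\H(1)=0$ the chain of estimates in Proposition~\ref{prop:monoseparable} goes through verbatim and yields \eqref{eq:(<)}. It does not, and it cannot: \eqref{eq:(<)} is simply false for a general additively separable, continuous, monotone $\H$ with $\H(1)=0$. Take $\H[\P]=\sum_i(\p_i^2-\p_i)$, generated by the convex continuous function $\p^2-\p$; it satisfies (A), (S), (C), (M$^+$) and $\H(1)=0$, yet for $\P=(0,1,2)$ and $\Q=(0,3/2,3/2)$ one has $\P^\uparrow_<=\Q^\uparrow_<=(0)$ while $\H[\P]=2\ne 3/2=\H[\Q]$. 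The reason the Proposition~\ref{prop:monoseparable} argument does not survive your normalization is that it uses the full boundary condition (co-B2) --- the values $\F(0,\ldots,0,N)=N-1$ enter both when the block of zeroes is appended at the start of the second chain and when the blocks $(0,\ldots,0,\lceil\p_j\rceil)$ are split off --- not merely $\F(1)=0$. Consequently Lemma~\ref{le:separable} is unavailable to you, your candidate generator (constant on $(1,\infty)$) need not exist for such $\H$ (no generator of $\sum_i(\p_i^2-\p_i)$ is constant past $1$, since generators differ only by multiples of $1-\p$), and the continuity of $\h$ on which your convexity step leans is unproved. The paper sidesteps all of this: it takes additive separability as given by hypothesis, shows every generating function is midpoint convex directly from (M$^+$), and obtains a continuous generator from Proposition~\ref{prop:generating} (i.e.\ Corollary~\ref{cor:(<)} plus Lemma~\ref{le:separable}(b)), which is legitimate in the only case where the forward direction is subsequently needed, namely $\F\in\Fs$. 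Your part (ii) is repaired the same way: for $\F\in\Fs$, Corollary~\ref{cor:(<)} supplies \eqref{eq:(<)}, after which your construction of $\f^\star$ and the affine-difference argument go through.
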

\begin{proof} %
  \/i \/{$\Leftarrow$} The convexity and continuity of $\h$ at $\p=0$ imply its continuity on $[0,\infty)$, which 
  guarantees the continuity (C) of $\H$. In the presence of additive separability, conditions entailed by (M$^+$) 
  take the form  
\dsskip
  \begin{align}
    \h(\p_i)+\h(\p_j) \le \h(\p_i-\e)+\h(\p_j+\e)\quad,\quad \p_i\le \p_j
  \end{align}
  To show that this also follows from the stated properties of $\h$, consider function $\hh$ which equals $\h$ 
   everywhere except on interval $[\p_i,\p_j]$, where it is replaced by a linear segment with boundary values 
   $\h(\p_i)$ and $\h(\p_j)$. Such function $\hh$ is still convex, which implies
  \begin{align*}
    \hh\left( \frac{(\p_i-\e)+(\p_j+\e)}{2} \right)  \le  \frac{\hh(\p_i-\e) + \hh(\p_j+\e)}{2}.
  \end{align*}
    Then, by linearity, the left-hand side is
   \[  \hh\left( \frac{\p_i+\p_j}{2} \right)
       = \frac{\hh(\p_i) + \hh(\p_j)}{2}  =  \frac{\h(\p_i) + \h(\p_j)}{2}. \]
    The inequality turns into
\begin{samepage}
\dsskip
  \begin{align*}
    \h(\p_i)+\h(\p_j) \le \hh(\p_i-\e) + \hh(\p_j+\e) = \h(\p_i-\e)+\h(\p_j+\e) 
  \end{align*}
  as needed. 
\end{samepage}
\br7
  \/{$\Rightarrow$} Consider the (M$^+$) condition  $\H(...\p_i...\p_j...) \le \H(...\p_i-\e...\p_j+\e...)$ for additively 
  separable $\H$. Setting $\p_i = \p_j = \p$ and, subsequently, $a=\p-\e,\; b=\p+\e$, we obtain in turn
  \begin{eqnarray*}
      \h(\p) + \h(\p)  &\le&  \h(\p-\e) + \h(\p+\e)
     \\ \h\left( \frac{a+b}{2} \right) &\le& \ds\frac{\h(a) + \h(b)}{2}.
  \end{eqnarray*} 
  Hence any $\h$, a generating function of $\H$, is midpoint convex on $[0,N]$ for all $N$. 
  It is well known that every such function is convex if it is continuous. 
  We thus select a continuous generating function $\h$, whose existence is guaranteed by Proposition 12. 
  Such a resulting $\h$ is then both continuous at $\p=0$ and convex.
  Note that $\h$ is an arbitrary continuous generating function, so all continuous generating functions $\h$ 
  are convex. This is needed in the proof of (ii) that follows.
\br7
   \/{ii} Proposition \ref{prop:monoseparable} implies the additive separability of $\F \in \Fs$ and the rest of 
   the demonstration is contained in the proof of (i)($\Rightarrow$) above.  
\end{proof}

\bigskip
We are now in a position to describe the set $\Fs$, specified by Definition \ref{def:fun}, explicitly. 

\begin{theorem}{\bf (Set of co-ENFs)}\label{th:eff}
\br5
  $\F \in \Fs$ if and only if it is generated by a convex and continuous function $\f$, which is zero on $[1,\infty)$ 
  and $\f(0)=1$. Such a generating function $\f$ of $\F$ is unique.
\end{theorem}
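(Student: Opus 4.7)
The plan is to chain together the three main results already established (Propositions \ref{prop:monoseparable}, \ref{prop:generating}, and \ref{prop:convex}) and reduce the theorem to a short analytic claim about convex functions on $[0,\infty)$ that vanish at every positive integer. For the forward direction, given $\F \in \Fs$, Proposition \ref{prop:monoseparable} gives additive separability, Proposition \ref{prop:generating} supplies the unique continuous generating function $\f$ with $\f(0)=1$, and Proposition \ref{prop:convex}(ii) promotes this $\f$ to a convex function. Uniqueness of such $\f$ is thereby already secured, and the remaining task is to derive $\f \equiv 0$ on $[1,\infty)$.

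The key analytic step is to extract this vanishing condition from (co-B2). Substituting $\P = (N,0,\ldots,0) \in \Ps_N$ into the separable form gives $\f(N) + (N-1)\f(0) = N-1$, and with $\f(0) = 1$ this forces $\f(N) = 0$ for every $N \in \N$. Convexity then yields $\f \le 0$ on each $[n,n+1]$ with $n \ge 1$ via the chord bound $\f(x) \le \lambda\f(n) + (1-\lambda)\f(n+1) = 0$. For the matching lower bound I would argue by contradiction: if $\f(x_0) < 0$ for some $x_0 \in (n, n+1)$ with $n \ge 1$, then $n+1$ lies strictly between $x_0$ and $n+2$, so writing $n+1 = \lambda x_0 + (1-\lambda)(n+2)$ for some $\lambda \in (0,1)$ and invoking convexity with $\f(n+2) = 0$ yields $\f(n+1) \le \lambda \f(x_0) < 0$, contradicting $\f(n+1) = 0$. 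I expect this ``three-consecutive-integers'' trick to be the most delicate step, since a naive two-integer argument on $[n,n+1]$ alone cannot exclude a downward bump between two zeros of a convex function; one must reach outside the interval.

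The converse is essentially a verification. Given $\f$ with the stated properties, define $\F[\P] = \sum_i \f(\p_i)$. Then (A) and (S) are immediate from the separable form, (C) follows from continuity of $\f$, and (M$^+$) is supplied by the $(\Leftarrow)$ direction of Proposition \ref{prop:convex}(i) since $\f$ is continuous at $0$ and convex. Finally, (co-B2) reduces to the one-line calculation $\F(N,0,\ldots,0) = \f(N) + (N-1)\f(0) = 0 + (N-1) = N-1$. Uniqueness of the canonical $\f$ for a given $\F$ is inherited directly from Proposition \ref{prop:generating}, applied to the prescribed value $\f(0) = 1$.
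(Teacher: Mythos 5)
Your proposal is correct and follows essentially the same route as the paper: Proposition \ref{prop:monoseparable} for separability, Proposition \ref{prop:generating} for the unique continuous generating function with $\f(0)=1$, Proposition \ref{prop:convex} for convexity and for (M$^+$) in the converse, and the substitution of $(N,0,\ldots,0)$ into (co-B2) to force $\f(N)=0$ at the integers. The only difference is that you spell out, via the chord bound and the three-consecutive-integers argument, why a convex function vanishing at every positive integer must vanish on all of $[1,\infty)$ --- a step the paper asserts with a bare ``consequently'' --- and your argument for it is valid.
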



\begin{proof}
  \/{$\Rightarrow$} $\F \in \Fs$ is additively separable by Proposition \ref{prop:monoseparable}. Then, as 
  a consequence of additivity (A) and the boundary conditions (co-B2), we have 
  $\;\F(0,\dottt,0,N) = (N-1)\cdot \f(0) + \f(N) = N-1$, so that 
\dsskip
  \begin{align}\label{eq:B2c}
    \f(N) = (N-1)\cdot \bigl( 1 - \f(0) \bigr)
  \end{align}
  for all $N$. Given the continuity of $\F\in \Fs$, Proposition \ref{prop:generating} guarantees
  the existence of its unique continuous generating function with $\f(0)=1$.
  In conjunction with Eq.~\eqref{eq:B2c}, this implies that $\f(N)=0$ for all $N$.
  Furthermore, this continuous generating function is convex by (ii) of Proposition \ref{prop:convex},
  and consequently, it is zero on the entire $[1,\infty)$.
  This demonstrates the existence of unique $\f$ with all required properties.
\br7
  \/{$\Leftarrow$} For the opposite direction, let $\F[\P]=\sum \f(\p_i)$, where $\f$ is continuous, convex, 
  $\f(0)=1$, and $\f(\p)=0$ on $[1,\infty)$.
  Then (A), (C), (S), and (co-B2) follow immediately, while (M$^+$) is a consequence of 
  Proposition \ref{prop:convex}(i).
\end{proof}

\bigskip

Note that the unique choice of the continuous generating function for co-ENF is facilitated by a natural
choice $\f(0)=1$, expressing the fact that the object assigned zero probability should not contribute 
to the effective number total ($\cf(0)=0$). However, it is worth pointing out that, as shown below, the same 
unique choice of a generating function is selected by the requirement of the boundedness on the entire 
$[0,\infty)$.   

\begin{corollary}\label{cor:bounded}
  Let $\f$ be the generating function of $\F \in \Fs$, specified in Theorem \ref{th:eff}. Then
\br5
  \/{i} $0 \le \f(\p) \le 1$ for all $\p$
\br5
  \/{ii} $\f$ is the only generating function of $\F$ that is bounded on its whole domain $[0,\infty)$
\end{corollary}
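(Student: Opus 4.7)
The proof splits naturally into two steps: first I would establish the pointwise bounds in (i) using the convexity and boundary properties of $\f$ provided by Theorem \ref{th:eff}, and then I would deduce uniqueness in (ii) by reducing to the parameterization of generating functions given by Lemma \ref{le:generating}.

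For (i), the inequalities are immediate on $[1,\infty)$ since $\f \equiv 0$ there, so all the work happens on $[0,1]$. For the upper bound I would decompose $\p = (1-\p)\cdot 0 + \p \cdot 1$ and invoke convexity together with $\f(0)=1$ and $\f(1)=0$, yielding
\[
   \f(\p) \,\le\, (1-\p)\f(0) + \p\f(1) \,=\, 1-\p \,\le\, 1.
\]
For the lower bound I would argue by contradiction. Suppose $\f(\p_0) < 0$ for some $\p_0 \in (0,1)$. Applying convexity to the three points $\p_0 < 1 < 2$, the secant joining $(\p_0,\f(\p_0))$ to $(2,\f(2)) = (2,0)$ takes the value $\f(\p_0)/(2-\p_0) < 0$ at $\p = 1$, while convexity forces $\f(1) = 0$ to lie at or below this secant. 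The resulting contradiction gives $\f(\p_0) \ge 0$.

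For (ii), let $\f'$ be any generating function of $\F$ that is bounded on $[0,\infty)$. Since part (i) shows $\f$ is itself bounded, the difference $\f - \f'$ is bounded on every finite interval $[a,b]$, so Lemma \ref{le:generating}(ii) gives $\f'(\p) - \f(\p) = (1-\p)K_0$ for some constant $K_0$. For $\f(\p) + (1-\p)K_0$ to remain bounded as $\p \to \infty$, one must have $K_0 = 0$, hence $\f' = \f$.

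The only subtle step is the lower bound in (i): the naive attempt to sandwich $\f(\p)$ between the endpoint values on $[0,1]$ fails because convexity gives an upper, not lower, bound from endpoint secants. The argument instead exploits the fact that $\f$ vanishes on the entire half-line $[1,\infty)$, which lets one pick a point strictly to the right of $1$ (such as $2$) as the right endpoint of a secant that brackets $\p = 1$ in its interior.
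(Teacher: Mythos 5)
Your proof is correct and follows essentially the same route as the paper: part (i) is exactly the paper's ``immediate from $\f(0)=1$, $\f=0$ on $[1,\infty)$, and convexity,'' with your secant arguments supplying the standard details (the lower bound via a chord from $\p_0$ to a point past $1$ is precisely the right way to exploit the vanishing half-line), and part (ii) is the same reduction to Lemma~\ref{le:generating}(ii) with $K_0=0$ forced by boundedness at infinity.
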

\begin{proof}
   \/{i} This immediately follows from $\f(0) = 1$, $\f(\p)=0$ on $[1,\infty)$, and convexity. 
\br7
  \/{ii} The boundedness of $\f$ follows from (i). To demonstrate uniqueness, assume there is another
  bounded generating function $\f_1$ of $\F$. Thus, $\f_1 - \f$ satisfies the assumptions of 
  Lemma~\ref{le:generating}(ii), implying the existence of non-zero $K_0$ such that 
  $\f_1(\p) = \f(\p) + K_0 (1-\p)$ for $\p \in [0,\infty)$. However, this contradicts the boundedness
  of $\f_1$, which demonstrates the claimed uniqueness. 
\end{proof}

\bigskip
  Below, we will make use of the following obvious lemma and a simple corollary.
\begin{lemma}\label{le:less}
   Let $\H_1$ and $\H_2$ be real additively separable functions on $\Ps$.
   If there exist respective generating functions such that $\h_1(\p) \le \h_2(\p)$, for all $\p$,
   then $\H_1(\P) \le \H_2(\P)$, for all $\P \in \Ps$.
\end{lemma}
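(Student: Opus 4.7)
The statement is described as obvious by the authors, and indeed the proof should be a one-line observation: additive separability converts a pointwise inequality between generating functions into a pointwise inequality between the corresponding sums. My plan is simply to unwind the definitions and appeal to the monotonicity of finite sums of reals.

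Concretely, I would fix an arbitrary $\P = (\p_1,\ldots,\p_N) \in \Ps_N$. By Definition \ref{df:separable} applied to $\H_1$ and $\H_2$ with their respective generating functions $\h_1,\h_2$, we have
\begin{equation*}
   \H_1[\P] \,=\, \sum_{i=1}^N \h_1(\p_i), \qquad \H_2[\P] \,=\, \sum_{i=1}^N \h_2(\p_i).
\end{equation*}
The hypothesis $\h_1(\p) \le \h_2(\p)$ for all $\p \in [0,\infty)$ applies in particular at each component $\p_i$, so $\h_1(\p_i) \le \h_2(\p_i)$ for $i=1,\ldots,N$. Summing these $N$ inequalities preserves the direction of the inequality in $\R$, yielding $\H_1[\P] \le \H_2[\P]$. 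Since $\P \in \Ps$ was arbitrary, the claim follows.

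There is essentially no obstacle here: the argument uses only the definition of additive separability and the fact that a sum of nonnegative terms is nonnegative. No appeal to continuity, convexity, or any of the heavier structural results (Proposition \ref{prop:monoseparable}, Theorem \ref{th:eff}, etc.) is required, because the lemma is stated for generic additively separable functions rather than for members of $\Fs$. The only thing worth being careful about is that the hypothesis is phrased in terms of the existence of suitable generating functions $\h_1,\h_2$, so one should state explicitly that those particular generating functions are the ones being used to compute $\H_1[\P]$ and $\H_2[\P]$ in the displayed equation above — this is automatic from Definition \ref{df:separable}, but worth noting since generating functions are not unique (cf.\ Lemma \ref{le:generating}).
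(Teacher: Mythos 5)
Your proof is correct and is exactly the one-line argument the paper has in mind — indeed the paper labels this lemma ``obvious'' and gives no proof at all, so unwinding Definition~\ref{df:separable} and summing the componentwise inequalities is precisely what is intended. Your remark about fixing which generating functions are used to evaluate $\H_1[\P]$ and $\H_2[\P]$ is a sensible precaution given the non-uniqueness noted in Lemma~\ref{le:generating}, but nothing further is needed.
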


\begin{corollary}\label{cor:Bounded}
  If $\F \in \Fs$, then $0 \le \F[\P] \le N-1$, for all $\P \in \Ps$.
\end{corollary}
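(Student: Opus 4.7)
The plan is to reduce the claim directly to the characterization of $\F \in \Fs$ supplied by Theorem~\ref{th:eff}, together with the bounds on its canonical generating function provided by Corollary~\ref{cor:bounded}(i). Specifically, I would first invoke Theorem~\ref{th:eff} to write $\F[\P] = \sum_{i=1}^N \f(\p_i)$ for all $\P \in \Ps_N$, where $\f$ is continuous, convex, $\f(0)=1$, and $\f(\p)=0$ for $\p \ge 1$; and Corollary~\ref{cor:bounded}(i) to record $0 \le \f(\p) \le 1$ pointwise on $[0,\infty)$.

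For the lower bound $\F[\P] \ge 0$, the argument is immediate: each summand $\f(\p_i)$ is non-negative, so the total is non-negative. (Alternatively, this is an instance of Lemma~\ref{le:less} applied with $\H_1$ having the identically-zero generating function.)

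For the upper bound $\F[\P] \le N-1$, the key observation is that the constraint $\sum_{i=1}^N \p_i = N$ built into $\Ps_N$ forces at least one component to satisfy $\p_{i_0} \ge 1$ (otherwise the total would be strictly less than $N$). For this index, $\f(\p_{i_0}) = 0$ since $\f$ vanishes on $[1,\infty)$, while for the remaining $N-1$ indices the bound $\f(\p_i) \le 1$ yields
\[
   \F[\P] \,=\, \sum_{i=1}^N \f(\p_i) \,=\, \f(\p_{i_0}) \,+\, \sum_{i \ne i_0} \f(\p_i) \,\le\, 0 + (N-1) \cdot 1 \,=\, N-1.
\]

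I do not expect any real obstacle here: once Theorem~\ref{th:eff} and Corollary~\ref{cor:bounded}(i) are in hand, the statement is purely an arithmetic consequence of the normalization $\sum \p_i = N$ together with the support and range of $\f$. The only mild subtlety worth double-checking is the edge case $N=1$, where $\P=(1)$ forces $\F[(1)] = \f(1) = 0$, consistent with both bounds.
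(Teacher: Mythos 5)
Your proposal is correct and follows essentially the same route as the paper's proof: both derive the pointwise bounds $0 \le \f(\p) \le 1$ from Theorem~\ref{th:eff} and Corollary~\ref{cor:bounded}(i), and both sharpen the upper bound from $N$ to $N-1$ by noting that the normalization $\sum_i \p_i = N$ forces some $\p_j \ge 1$, where $\f(\p_j)=0$. Your explicit justification of the existence of such a component and the $N=1$ sanity check are minor elaborations, not a different argument.
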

\begin{proof}
  Let $\f$ be the generating function specified in Theorem \ref{th:eff}.
  From (i) of Corollary \ref{cor:bounded}, we have $\sum_i 0 \le \sum \f(\p_i) \le \sum_i 1$, which 
  translates into  $0 \le \F[\P] \le N$ by Lemma \ref{le:less}. To put the second inequality into 
  the claimed form, note that there is always at least one $\p_j \ge 1$.
  For this $\p_j$, we have $\f(\p_j)=0$ by Theorem \ref{th:eff}.
  This lowers the upper bound for 
  $\F[\P]$ by unity and proves the second inequality.
\end{proof}

\bigskip
Using the above preparation, we will now demonstrate several structural properties of $\Fs$. 

\begin{theorem}{\bf (Maximality)}\label{th:max}\setstretch{1.45}
\br1
  If $\F \in \Fs$, then the following holds for all $\P = (...,\p_i,...) \in \Ps$,
  \\ \/i $\F_{(0)}[\P] \,=\, \Fo[\P] \,\le\, \F[\P] \,\le\, \F_{\star}[\P] \,=\, \F_{(1)}[\P]$
  \\ \/{ii} $\Fo[\P] \,=\, \F[\P] \,=\, \F_\star[\P]  \quad\Leftrightarrow\quad  \p_i\notin (0,1),\; i=1,2,...,N$
  \\ \/{iii} $\beta_0 \,=\, \Fo[\P] \;<\; \F_\star[\P] \,=\, \beta_1  \;\,\Rightarrow\;\,
   \{ \F[\P]: \F\in \Fs \}  =  [\beta_0,\beta_1]$  
\end{theorem}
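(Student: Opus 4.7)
The plan is to exploit Theorem \ref{th:eff}, which represents every $\F\in\Fs$ as $\F[\P]=\sum_i \f(\p_i)$ for a unique continuous convex generating function $\f$ with $\f(0)=1$ and $\f(\p)=0$ on $[1,\infty)$. Each of the three parts then reduces to pointwise comparisons of generating functions on $[0,\infty)$, together with Lemma \ref{le:less} and a convex-interpolation argument.

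For (i), I would establish the pointwise sandwich $\fo(\p)\le \f(\p)\le \f_\star(\p)$ on $[0,\infty)$ and then invoke Lemma \ref{le:less}. The boundary cases $\p=0$ and $\p\ge 1$ are immediate: all three functions take the same values $1$ and $0$ respectively. On the open interval $(0,1)$ the lower bound is $0=\fo(\p)\le \f(\p)$, which is Corollary \ref{cor:bounded}(i). The upper bound is the only nontrivial step: since $\f$ is convex with $\f(0)=1$ and $\f(1)=0$, the chord joining these endpoints is exactly $1-\p=\f_\star(\p)$, so convexity forces $\f(\p)\le 1-\p=\f_\star(\p)$.

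For (ii), the ($\Leftarrow$) direction is immediate term-by-term: at $\p_i=0$ all three generating functions $\fo,\f,\f_\star$ equal $1$, and at $\p_i\ge 1$ they all equal $0$, so the three sums coincide whenever every component of $\P$ lies outside $(0,1)$. For ($\Rightarrow$), I would argue the contrapositive: if some $\p_j\in(0,1)$, then $\fo(\p_j)=0<1-\p_j=\f_\star(\p_j)$ strictly, while $\fo(\p_i)\le\f_\star(\p_i)$ for every other $i$ by the argument of (i); summing gives $\Fo[\P]<\F_\star[\P]$, which rules out the claimed triple equality.

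Part (iii) is where the real work lies: one must realize every value in $[\beta_0,\beta_1]$ by some $\F\in\Fs$. Attaining the upper endpoint $\beta_1$ is free since $\F_\star\in\Fs$ by Example \ref{ex:flin}. The subtle point is that $\Fo\notin\Fs$, yet $\beta_0$ must still be hit. My plan is to exploit the freedom that the generating function need only agree with $\fo$ on the finite set of components of the specific $\P$. Let $\p_{\min}=\min\{\p_i:\p_i>0\}$, which is positive (since $\sum\p_i=N>0$) and lies in $(0,1)$ because the hypothesis $\beta_0<\beta_1$ together with (ii) forces some component into $(0,1)$. Define $\f_0(\p)=\max\{0,1-\p/\p_{\min}\}$; this is continuous, convex, satisfies $\f_0(0)=1$, and vanishes on $[\p_{\min},\infty)\supset[1,\infty)$, so by Theorem \ref{th:eff} the corresponding $\F_0$ lies in $\Fs$. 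By construction $\f_0(\p_i)=\fo(\p_i)$ for every component of $\P$, hence $\F_0[\P]=\beta_0$. Finally, for $\lambda\in[0,1]$ the convex combination $\f_\lambda=(1-\lambda)\f_0+\lambda\f_\star$ inherits convexity, continuity, the boundary values $\f_\lambda(0)=1$, and vanishing on $[1,\infty)$; so $\F_\lambda\in\Fs$ by Theorem \ref{th:eff}, and $\F_\lambda[\P]=(1-\lambda)\beta_0+\lambda\beta_1$ sweeps out $[\beta_0,\beta_1]$. Combined with the containment from (i), this gives the claimed set equality. The main obstacle is precisely this construction of $\F_0\in\Fs$ hitting $\beta_0$; the key observation is that we only need agreement of generating functions on the finite argument set $\{\p_i\}$, which allows the adjustable-slope piecewise-linear $\f_0$ to succeed even though $\fo$ itself fails continuity.
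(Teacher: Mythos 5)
Your proof is correct, and parts (i) and (ii) follow essentially the paper's own route (pointwise comparison of generating functions plus Lemma \ref{le:less}, and the termwise computation of $\F_\star[\P]-\Fo[\P]$). Part (iii), however, is genuinely different and in one respect sharper. The paper realizes the interval via the one-parameter family $\F_{(\alpha)}$, arguing that $g(\alpha)=\F_{(\alpha)}[\P]$ is continuous and increasing on $[0,1]$ with range $[\beta_0,\beta_1]$; this is clean but leans on the inclusion $\{\F_{(\alpha)}[\P]:\alpha\in[0,1]\}\subseteq\{\F[\P]:\F\in\Fs\}$, which is delicate at the endpoint $\alpha=0$ since $\F_{(0)}=\Fo\notin\Fs$ (for $\alpha\in(0,1]$ one only gets $(\beta_0,\beta_1]$ directly). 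You instead attain $\beta_0$ by a generating function $\f_0(\p)=\max\{0,\,1-\p/\p_{\min}\}$ tailored to the finite component set of the fixed $\P$: it is continuous, convex, has the right boundary behavior, and agrees with $\fo$ at every $\p_i$, so Theorem \ref{th:eff} puts the associated $\F_0$ in $\Fs$ with $\F_0[\P]=\beta_0$; linear interpolation of generating functions then sweeps $[\beta_0,\beta_1]$. This buys an explicit witness for the closed lower endpoint and makes transparent exactly why the infimum is attained for each fixed $\P$ even though $\Fo$ itself is not an ENF, at the price of a construction that depends on $\P$ (the paper's $\alpha$-family is $\P$-independent and exhibits a single monotone path through $\Fs$). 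Both arguments correctly use part (i) for the reverse inclusion.
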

\begin{proof}\setstretch{1.05} %
  \/i To show both inequalities, let $\f$ be the continuous generating function of $\F$ guaranteed by 
   Theorem \ref{th:eff}. We will show that $\f$ satisfies $\fo(x) \le \f(x) \le \f_{\star}(x)$ on $[0,\infty)$, and
   then, Lemma \ref{le:less} will complete the proof of this part. The first inequality $\fo(x) \le \f(x)$ follows 
   directly from Theorem \ref{th:eff}, Corollary \ref{cor:bounded}(i), and the definition of $\fo$.
   The second inequality holds as the equality on $[1,\infty)$ by Theorem \ref{th:eff} and the definition 
   of $\f_{\star}$. To show the second inequality on $[0,1)$, note that the graphs of both $\f$ and $\f_{\star}$ 
   pass through the points $(0,1)$ and $(1,0)$, $\f$ is convex by Proposition \ref{prop:convex}(ii), and 
   $\f_{\star}$ is linear between those points. Therefore, $\fo(x) \le \f(x) \le \f_{\star}(x)$ on $[0,\infty)$ as promised.
\br7
   \/{ii} \/{$\Leftarrow$} If $\p_i\notin (0,1)$ for $i=1,2,...,N$, then $\Fo[\P] = \F_\star[\P]$,
     and the equality for $\F[\P]$ follows from (i).
\br7
  \/{$\Rightarrow$} Let $\Fo[\P] = \F_\star[\P]$. Then,
  \[ \sum_{\p_i=0} 1  =  \sum_{\p_i=0} 1 + \sum_{\p_i \in (0,1)} (1 - \p_i). \]
  Hence, $\; \sum_{\p_i \in (0,1)} (1 - \p_i) = 0$, and so, $\p_i\notin (0,1)$ for $i=1,2,...,N$.
\br7
   \/{iii} From $\Fo[\P] < \F_\star[\P]$ we have the existence of at least one $\p_i \in (0,1)$ by (ii).
   Then, for a fixed $\P$, define
  \[ g(\alpha) = \sum_{\p_i=0} 1 + \sum_{\p_i \in (0,1)} (1 - \p_i^\alpha) = \F_{(\alpha)}[\P]. \]
   The function $g$ is continuous, increasing, and maps interval $[0,1]$ onto interval $[\beta_0,\beta_1]$.
   Then
  \[ \{ \F[\P]: \F\in \Fs \}  \supseteq  \bigl\{ \F_{(\alpha)}[\P]: \alpha \in [0,1] \bigr\}
         =  \bigl[ \Fo[\P],\; \F_\star[\P] \bigr] = [\beta_0,\beta_1]. \]
  The opposite inclusion follows from (i).
\end{proof}


\section{Counting The General Quantum Identities}
\label{sec:count_idents}

With the effective number theory in place, we now return to the topic that motivated its 
construction, namely the quantum identity problem. In particular, we will make explicit 
some of the straightforward, but useful and relevant, generalizations of [Q]. This serves, 
in part, as a stepping stone toward the most generic application of effective numbers 
in quantum theory, namely as a tool to characterize quantum states 
(Sec.~\ref{sec:str_states}).

A conceptually important extension of [Q] and [A] is made possible by the additive separability 
of ENFs. Indeed, instead of an orthonormal basis, consider any collection $\{ \, \ketj \}$ 
of $n$ orthonormal states from an $\nrN$-dimensional Hilbert space ($1 \!\le\! n \!\le\! \nrN$). 
How many states from $\{ \, \ketj \}$ is a system described by $\ketpsi$ effectively in? 
Let $\efN \in \Nmaps$ be an ENF and $\cf$ the counting function uniquely associated with
it by virtue of Theorem$\,$1. We define
\begin{equation}
     \efN\bigl[\, \ketpsi , \{ \, \ketj \} \,\bigr] \,\equiv\, 
     \sum_{j=1}^n \cf(\w_j)  \quad\; , \quad\;
     \w_j = \nrN  \mid \!\! \langle \,j \!\mid \! \psi \,\rangle \!\! \mid^2 \quad
     \label{eq:4_10}                            
\end{equation}
for each $\ketpsi$ and $\{ \, \ketj \}$. This assignment is meaningful in the following 
sense. Given a fixed $\{ \, \ketj \}$, let $\{ \, \keti \}$ be its arbitrary completion 
into a basis of the Hilbert space. Then, owing to the additive separability of ENFs, 
\begin{equation}
     \efN \bigl[\, \ketpsi , \{ \, \keti \} \,\bigr]  \;=\;
     \efN \bigl[\, \ketpsi , \{ \, \ketj \} \,\bigr]  \;+\;  
     \efN \bigl[\, \ketpsi , \{ \, \keti \} \setminus \{ \, \ketj \} \,\bigr]
     \label{eq:4_20}
\end{equation}
where ``$\setminus$'' denotes the set subtraction. In other words, the contribution of 
$\{ \, \ketj \}$ to $\efN[\, \ketpsi , \{ \, \keti \} \,]$, defined by \eqref{eq:4_10}, is 
independent of the completion $\{ \, \keti \}$, and is thus uniquely associated with 
$\{ \, \ketj \}$ for fixed $\efN \in \Nmaps$. It is the effective number of states from 
$\{ \, \ketj \}$ contained in $\ketpsi$ according to $\efN$. Moreover, it is 
straightforward to check that $\efNm[\, \ketpsi , \{ \, \ketj \} \,]$ minimizes 
the effective number so assigned, and we have

\medskip

\noindent
    {\bf [A$^\prime$]} {\em Let $\ketpsi$ be a state vector from an $\nrN$-dimensional 
    Hilbert space 
    and $\{ \, \ketj \}  \equiv  \{ \, \ketj \mid\, j=1,2,\ldots,n \le \nrN \,\}$ the set of 
    $n$ orthonormal states in this space. The physical system described by $\ketpsi$ is 
    effectively in $\efNm [\, \ketpsi , \{ \, \ketj \} \,]$ states from 
    $\{ \, \ketj \}$, specified by~\eqref{eq:4_10} with $\cf=\cfu$.}

\medskip

\noindent
A few simple points regarding the above are worth emphasizing.

\medskip

\noindent
(i) The extension \eqref{eq:4_10} and the ensuing generalization of [A] to [A$^\prime$] 
arise because the abundance of quantum identities is determined ``locally'', namely 
without reference to basis elements orthogonal to the subspace in question. Apart from 
being natural for a measure-like characteristic, this feature has practical consequences 
in many-body applications where the dimension of the Hilbert space grows exponentially 
with the size of the system. Indeed, the above avoids such complexity in certain calculations,
thus providing a computational benefit.

\medskip

\noindent
(ii) None of the above applies to the abundance of quantum identities determined by 
the participation number $\pN$ of Eq.~\eqref{eq:045}, since this value does not split 
into contributions from orthogonal subspaces generated by the partitioned basis. 
This is of course due to the lack of additivity and hence of additive separability. 

\medskip

\noindent
(iii) The above considerations are clearly not limited to counting quantum identities. 
In a generic situation, the inquiry is concerned with the contribution to the effective 
number from a subset of weighted objects. To formalize such assignments directly in 
the effective number theory, one simply extends $\efN \in \Nmaps$ from domain $\setW$ 
of counting vectors to the domain of general weights
\begin{equation}
    \setWl=\union_n \setWl_n    \qquad , \qquad
    \setWl_n  \,\equiv\, 
    \bigl\{ \, \Wl=(\wl_1,\wl_2, \ldots, \wl_n) \;\mid\;  \wl_j \ge 0 \; \bigr\} \quad  
    \label{eq:4_30}
\end{equation}
by virtue of its counting function $\cf$, namely $\efN[\Wl]=\sum_j \cf(\wl_j)$. 

\smallskip

While the effective number $\efN[\, \ketpsi , \{ \, \keti \} \,]$ specifies how many 
identities from basis $\{ \, \keti \}$ the state 
$\ketpsi$ can effectively take, it is frequently useful to inquire about 
a coarse-grained version of such an effective count. To formalize the corresponding 
generalization, consider 
the decomposition of Hilbert space $\Hsp$ into $M$ mutually orthogonal subspaces 
$\{ \Hsp_m \}$
\begin{equation}
     \Hsp = \Hsp_1  \oplus \Hsp_2 \oplus \ldots \oplus \Hsp_M    \qquad , \qquad
     \{ \Hsp_m \}  \equiv  \{ \, \Hsp_m \mid\, m=1,2,\ldots,M \,\}
     \label{eq:4_40}                         
\end{equation}
Subspaces $\Hsp_m$ are implicitly treated as equivalent entities.\footnote{What is
considered ``equivalent'' is dictated by the physics involved, rather than the concept 
of effective number itself. For example, it is possible to encounter a situation 
where the dimensions of ``usefully equivalent'' subspaces are not the same.} 
In how many subspaces from $\{ \Hsp_m \}$ is the system described by $\ketpsi$ 
effectively?

Given a state of the system, the rules of quantum mechanics assign a probability to each 
subspace of the associated Hilbert space. The effective number theory therefore 
provides an immediate answer to the above question. More specifically, let $\ketchim$ be 
the (unnormalized) projection of $\ketpsi$ onto $\Hsp_m$. Then the probability vector 
is $P=(p_1,p_2,\ldots,p_M)$, where $p_m = {\langle \,\chi_m \!\mid\! \chi_m \,\rangle}$, 
and the corresponding counting vector is $\W=MP$. Hence, given an ENF, we assign 
$\efN[\, \ketpsi , \{ \, \Hsp_m \} \,] \equiv \efN[\W]$, which leads to the following
generalization [A$_g$] of [A].

\medskip

   \noindent
   {\bf [A$_g$]} {$\,$ \em Let $\W$ be the counting vector assigned by quantum mechanics 
   to state $\ketpsi$ and the orthogonal decomposition $\{ \Hsp_m \}$ of the Hilbert 
   space. Then the system described by $\ketpsi$ is effectively contained in 
   $\efNm[\, \ketpsi , \{ \, \Hsp_m \} \,] = \efNm[\W]$ subspaces from $\{ \, \Hsp_m \}$.}  
 
\medskip


Applying the logic identical to the one producing [A$^\prime$], it is straightforward 
to generalize [A$_g$] into [A$^\prime_g$] for counting the identities from arbitrary sets 
(not necessarily full decompositions) of mutually orthogonal subspaces.


\section{Application: The Structure of Quantum States}
\label{sec:str_states}

While our discussion was carried out in the context of the quantum identity problem, 
the constructed effective number framework offers a much larger scope of 
uses. Apart from physics and mathematics, these also appear in the areas of 
applied science simply due to the very basic role of the measure and probability in 
quantitative analysis. Examples of such applications will be discussed in 
the follow-up works. Here we describe a broader outlook on the utility of 
effective numbers in quantum theory.

With the quantum state encoding all ``options'' for the system, the physical 
content of $\ketpsi$ closely relates to all probability vectors $P$ induced
in this manner. 
Retracing the steps leading to [A], it is then meaningful to associate 
the effective number $\efNm$ with any complete set of mutually exclusive 
possibilities. We propose the collection of all such reductions 
\begin{equation}
    \ketpsi  \quad \longrightarrow \quad  P 
    \quad \longrightarrow \quad   \efNm
    \label{eq:4_45}
\end{equation}
as a systematic and physically relevant characterization of a quantum state. 
The general quantum identities of Sec.~\ref{sec:count_idents}, which can 
also be thought of as measurement outcomes, are the prime examples 
of objects/possibilities in question. However, any meaningful 
$\ketpsi \rightarrow P$ can be considered. The resulting variety offers a wide 
range of options for targeted insight into the structure of $\ketpsi$. 
We wish to highlight a few points in this regard. 

\smallskip

\noindent (i)
Each effective number characteristic can be refined by totals assigned 
to subsets of the associated sample space or coarse-grained by counting 
its partitions. Note that for quantum identities in [Q], this corresponds 
to refining by totals involving parts of the basis and coarse-graining 
by totals involving orthogonal decompositions.

\smallskip

\noindent (ii)
The proposed approach is universal with respect to the nature of 
the quantum system being described. Indeed, $\ketpsi$ may be as simple 
as a state of the harmonic oscillator, but as complex as a many-body state of 
quantum spins or the vacuum of non-Abelian gauge theory.

\smallskip

\noindent (iii)
Our present focus on the finite discrete case is not restrictive either. Even 
in situations involving the space-time continuum, the intermediate steps of 
ultraviolet (lattice) and infrared (volume) regularizations yield such 
descriptions. Removing the cutoffs then involves a suitable 
$\nrN \!\to\! \infty$ limit. 
For this purpose, it is more convenient to work with the effective fraction
rather than the effective number. In terms of probabilities, it is defined as
 \begin{equation}
    \efF[P] \,\equiv\, \frac{1}{\nrN} \; \efN[ \nrN P ]  
    \quad , \quad   \efN \in \Nmaps  
    \quad,\quad P \in \setP_\nrN    \quad
    \label{eq:105}                        
 \end{equation}
Function $\efNm$ produces the minimal effective fraction, namely
\begin{equation}
     \efFm[P] = \sum_{i=1}^\nrN \ffu(p_i,\nrN)    \quad,\quad
     \ffu(p,\nrN)  \equiv  \min\, \{\, p, 1/\nrN \,\}    \;
     \label{eq:115}         
\end{equation}
Restricting ourselves to quantum identities of [Q], the cutoff removal schematically 
proceeds as follows. Assume that $\ketpsi$ is the target state 
(from the infinite-dimensional Hilbert space) 
to be assigned an effective fraction in basis $\{ \,\keti \}$. 
Let ${\mid \! \psi^{\scrk} \,\rangle}$ represent $\ketpsi$ at the $k$-th step 
of the regularization process, involving the state space of increasing dimension 
$\nrN_k$. If the latter is spanned by $\{ \,\mid \! i^{\scrk} \,\rangle \}$ 
targeting $\{ \,\keti \}$ then~\footnote{The meaning of
$\{ \,\mid \! i^{\scrk} \,\rangle \} \equiv 
\{ \,\mid \! i^{\scrk} \,\rangle  \mid i=1,2,\ldots,\nrN_k \}$
targeting $\{ \,\mid \! i \,\rangle \}$ depends on the context but is usually 
clear on physics grounds.} 
\begin{equation}
     \efFm \bigl[ \,\mid \! \psi \,\rangle, \, \{ \,\mid \! i \,\rangle \} \, \bigr]  
     \;\equiv\; 
     \lim_{k \to \infty} \efFm 
     \bigl[ \,\mid \! \psi^{\scrk} \,\rangle, \, \{ \,\mid \! i^{\scrk} \,\rangle \} \, \bigr] 
     \;=\; 
     \lim_{k \to \infty} \efFm [P_k] 
     \label{eq:117}
\end{equation}
where $P_k \!=\! (p_1^\scrk,\ldots,p_{\nrN_k}^\scrk)$ , 
$p_i^\scrk \!=\! \,{\mid \! \langle \,i^{\scrk} \!\mid \! \psi^{\scrk} \,\rangle \! \mid^2}$. 
Note that  $\efFm$ reflects the ``localization'' of $\ketpsi$ in $\{ \,\keti \}$.

\smallskip

\noindent (iv)
An example of an application where the assignment \eqref{eq:4_45} is carried out without 
direct reference to the underlying Hilbert space is provided by the problem of the vacuum 
structure in quantum chromodynamics (QCD). This is often studied in the Euclidean path 
integral formalism, with the regularized vacuum represented by the statistical ensemble of 
lattice gauge configurations $U\!=\! \{U_{x,\mu} \}$. 
Given a composite field $O\!=\!O(x,U)$ and the induced space-time probability 
distribution $P(x,U) \propto |O(x,U)|$, the effective number framework can be used 
to determine the effective fraction of space-time occupied by $O$ for each $U$.
The corresponding quantum averages are of vital interest in this context.
Yet more indirect vacuum characteristics of such a type reflect the space-time properties 
of Dirac eigenmodes, whose main utility is to probe the features of quark dynamics. 


\section{Concluding Remarks}
\label{sec:conclude}

When a quantum system is (strongly) probed, it emerges in one of many possible "identities". 
This is among the key features of quantum behavior. Indeed, it underlies 
the notion of quantum uncertainty and is closely connected to a fruitful concept 
of localization. A well-founded prescription for the corresponding abundances is thus 
desirable. As a contemporary example, one may use it in the analysis of a quantum algorithm 
that produces the state $\ketpsi_o$ as an output of a quantum run and follows up 
with a measurement involving a basis $\{ \,\keti \}$. The effective number 
of distinct collapsed states $\keti$ obtained upon repetition of these steps is relevant 
for the assessment of the algorithm's efficiency. 

In this work, we showed that requiring the desired characteristics to be measure-like 
(additive) is fruitful in identifying a meaningful quantifier. In particular,
it results in the theoretical structure (effective number theory) revealing that 
a consistent assignment of totals to collections of objects with probability weights 
requires the existence of an inherent (minimal) amount $\efNm$. 
The appearance of such a qualitative feature in basic measure considerations suggests 
the utility of the constructed framework already in contexts much less abstract than 
quantum mechanics. For example, $\efNm$ can be viewed as an extension 
of the ordinary counting measure into what can be referred to as a {\em diversity measure} 
with its wide range of contemporary applications 
(social sciences, ecosystems; see e.g.~\cite{Lei12A,Lei16A}). 
Other viewpoints can cast it as a {\em choice measure}, facilitating a probabilistic 
notion of effective choices, or as a {\em support measure}, conveying the effective size 
of a function support (effective domain). Given this universality, $\efNm$ may find uses in 
multiple areas of quantitative science.


Finally, we wish to point out that the existence of a minimal amount is rooted in the simultaneous 
requirement of both monotonicity (Schur concavity) and additivity for ENFs. This combination is 
rather unusual from the mathematics standpoint. Indeed, while monotonicity is important for 
the theory of majorization~\cite{Mar11A}, it has no role in the standard formalization 
of the measure. Conversely, additivity is crucial for the latter but not native to the 
former. In fact, relaxing either \eqref{eq:mon-} or \eqref{eq:add} in Definition~0 leaves 
the respective effective pseudo-number assignments too arbitrary. However, their combination, 
which is necessary on conceptual grounds, leads to $\efNm$ and the associated insights.


\begin{acknowledgments}
{\bf Acknowledgments. }
I.H. acknowledges the support by  the Department of Anesthesiology at
the University of Kentucky.
\end{acknowledgments}



\end{document}